\pgfplotsset{compat=1.9}
\newcommand{\system}{{\sc Scrutinizer}\xspace}
\newtheorem{definition}{Definition}
\newtheorem{theorem}{Theorem}
\newtheorem{corollary}{Corollary}
\newtheorem{example}{Example}
\begin{document}

\title{Scrutinizer: A Mixed-Initiative Approach \\to Large-Scale, Data-Driven Claim Verification}

\numberofauthors{1} 
\author{
\alignauthor
Georgios Karagiannis$^\dagger$\thanks{The first two authors contributed equally.} ~~~ Mohammed Saeed$^{\ddagger *}$ ~~~ Paolo Papotti$^\ddagger$ ~~~ Immanuel Trummer$^\dagger$\\
       \affaddr{$^\dagger$Cornell University, USA ~~~~~ $^\ddagger$Eurecom, France}\\
       \affaddr{ \{ gk446,lt224 \}@cornell.edu,  \{ papotti,mohammed.saeed \}@eurecom.fr  }
}

\maketitle

\begin{abstract}
Organizations such as the International Energy Agency (IEA) spend significant amounts of time and money to manually fact check text documents summarizing data. The goal of the Scrutinizer system is to reduce verification overheads by supporting human fact checkers in translating text claims into SQL queries on an associated database. 

Scrutinizer coordinates teams of human fact checkers. It reduces verification time by proposing queries or query fragments to the users. Those proposals are based on claim text classifiers, that gradually improve during the verification of a large document. In addition, Scrutinizer uses tentative execution of query candidates to narrow down the set of alternatives. The verification process is controlled by a cost-based optimizer. It optimizes the interaction with users and prioritizes claim verifications. For the latter, it considers expected verification overheads as well as the expected claim utility as training samples for the classifiers. We evaluate the Scrutinizer system using simulations and a user study, based on actual claims and data and using professional fact checkers employed by IEA. Our experiments consistently demonstrate significant savings in verification time, without reducing result accuracy. 


\end{abstract}

\section{Introduction}
\label{sec:introduction}
Data is often disseminated in the form of text reports, summarizing the most important statistics. For authors of such documents, it is time-consuming and tedious to ensure the correctness of each single claim. Nevertheless, erroneous claims about data are not acceptable in many scenarios as each mistake can have dire consequences. Those consequences reach from embarrassing retractions (in case of scientific papers~\cite{hosseini2018doing}) to legal or financial implications (in case of business or health reports~\cite{ash2004some}). We present \system, a system that helps teams of fact checkers to verify consistency of text and data faster.



Our work is inspired and motivated by the real use case provided by the International Energy Agency (IEA). Every year the agency produces a report of more than 600 pages about the energy consumption and production in the world, covering historical facts and predictions both for individual countries and at the world level. We have been given access to the 2018 edition, which contains 7901 sentences with 1539 manually checked statistical claims. 
Every claim has been checked by three domain experts and their annotations have been collected in a spreadsheet. This process takes months of work of a team of domain experts. Consider the following example from our corpus of statistical claims.



\begin{example}
\label{ex:example1}
The institute has hundreds of relational tables with information about energy, pollution, and climate. A fragment of a table is reported in Figure~\ref{fig:emissions}. Consider the claim ``\textbf{In 2017, global electricity demand grew by 3\%}, more than any other fuel besides solar thermal, reaching 22 200 TWh.". An expert validates the claim in bold by identifying the relevant table(s) and by writing a query over such table to collect the relevant information. In the example:
\newline
\newline
\noindent\fbox{
\parbox{\linewidth}{
SELECT POWER(a.2017/b.2016,1/(2017-2016)) -1

FROM  GED a, GED b

WHERE a.Index = `PGElecDemand', b.Index = 'PGElecDemand'
}}\newline
\newline
Finally, the expert compares the output of the query with the claim and either validates or updates the claim.
\end{example}

\begin{figure}[t]
\begin{center}
\begin{small}
\begin{tabular}{l||l|l|l||l|l|l}
\hline
  Index    & {2017} & 2018 & ...  & {2030} & {2040}\\
     \hline
PGElecDemand    &  22 209    & 22 793  & ...  & 29 349 & 35 526\\
PGINCoal &  2 390    & 2 412 & ..    & 2 341   & 2 353\\
TFCelec     &  21 465  & 22 040 & ...   & 28 566   & 34 790\\
... &  ...  & ... & ...   & ...   & ...\\
\end{tabular}
\end{small}
\end{center}
  \caption{Global Energy Demand history and estimates (GED), the full table has 22 rows and 70 attributes.}
  \label{fig:emissions}
\end{figure}

Gathering data for the claim at hand and composing the right query for the validation takes expertise over the domain and data skills, taking several minutes for a single claim. 
We argue for the need of a system that takes as input the document and a corpus of related datasets to automatically identify the declarative queries that explain why every claim is validated or not by the data. 
\pagebreak
\subsection{Challenges}
Given a document with statistical claims and related datasets, our goal is to come up with the SQL queries that assist the users in the validation, suggesting alternative values for updating a claim in case of an incorrect statement.
Our real-world use case clearly shows three issues that make such data verification hard to automate. 

\vspace{1ex}
\noindent {\bf Text analysis.} Converting a textual claim to a structured query is difficult because claims are expressed in natural language, do not use a fixed vocabulary, and come from multiple authors with different wording and style. 

\vspace{1ex}
\noindent {\bf Query complexity.} Our analysis of the checks done in past by the validation team reveals that the subclass of queries used for checking claim is very wide, going from simple selection to complex mathematical operations involving group of values, aggregations, and functions with more than 100 different combinations of operations.

\vspace{1ex}
\noindent {\bf Large corpus of datasets.} Given a corpus of datasets, it is not clear which one(s) should be used to verify a new statistical claim. In reality, datasets do not come with rich metadata beyond table and attribute names and are heterogeneous in format, schema, and granularity of the data. 


\vspace{1ex}
An exhaustive search of all possible queries is unfeasible, but pruning of the search space must be done carefully. In particular, the testing of false claim is immediately affected, as it is not clear how to judge the inability to create a matching query: is it because of a factual error or from the pruning in the query generation?
With such a difficult problem, we found inspiration from an important resource in our use case. We notice that, by processing the annotations of the checkers, we can collect the data and the operations that have been used to verify every claim. This significant human effort can be used to train models that reduce the search space and identify the queries that verify the claims.

In this direction, we tackle the above challenges with a novel system that builds on three main modules: machine learning (ML) and  natural language processing (NLP) to process the text, human-in-the-loop by involving the domain experts in bootstrapping and validating the candidate queries, and query generation with a large library of functions. The involvement of the users immediately raises more challenges: how to divide the work among a crowd of domain experts? What are the right questions to ask them? How to schedule such questions? How to bootstrap and improve the quality of the models when the training data from previous checks is not available?

\subsection{Contributions}
Translating the claim to a structured query requires to recognize the semantics of the query and the correct data to run it. As the translation is a challenging process and we aim at supporting a large variety of use cases, our system steers the query generation and data matching by generating and scheduling questions to domain experts.

\begin{itemize}[leftmargin=*]
    \item We introduce a novel framework for statistical claims verification that minimizes the human effort (Section~\ref{sec:problem}). \system makes use of classifiers and simple questions to a crowd of domain experts to generate interpretable SQL queries that either validate or contradict the claim (Section~\ref{sec:overview}).   
    \item We build queries by extracting their main features from the textual claim and its context, such as surrounding paragraphs. The classifiers identify the dataset, the attributes, the rows and the mathematical operation that are required to verify the claim. A query generation algorithm combines the provided information into the interpretable queries that are exposed to the user to assess a claim (Section~\ref{sec:querygen}).  
    \item We introduce a cost model and scheduling algorithms for planning the sequence of claims to verify and the questions to ask to crowd of domain experts for a single claim. We give algorithms that minimize the verification cost for the users with quality bounds. The algorithms model the trade-off between the constraints given by the users and the necessity to bootstrap and fine tune the classifiers with labels (Section~\ref{sec:crowd}). 
    \item We experimentally verify with a user study and real data that our system is effective in supporting users in checking claims, enabling the verification of more than {1 claim per minute} on average with a reduction in time of 50\% compared to the original verification process (Section~\ref{sec:exps}). We corroborate those results via simulations, studying performance of different baselines when verifying larger reports.
\end{itemize}



\section{Problem Model}
\label{sec:problem}


We assume a scenario with a crowd of domain experts, a textual document $T$ to be verified, and a set of relational tables $D$.
The textual document is divided into sentences and each sentence $s \in T$ can contain one or more \textit{claims}, that is, word sequences that describe the output of a query $q$ over $D' \subseteq D$. More precisely:

\begin{definition}
A \emph{general claim} describes the comparison $op$ ($<, =, \neq, >$) between the value of query $q$ and a parameter $p$, when $q$ is executed on $D' \subseteq D$. \\
A claim is \emph{correct} if $q(D')\ op\ p$ is true.
\end{definition}

A special instance of our definition of general claim is a common class of statements, where the comparison is the equality and the parameter is a value reported in the claim itself. For equality, we consider a tolerance threshold (admissible error rate) that can be defined by the users.

\begin{definition}
An \emph{explicit claim} 
describes a query $q$ that, when executed on $D' \subseteq D$, returns a value close to the parameter $p$ stated in the claim. An explicit claim is \emph{correct} if the relative difference between $p$ and $q(D')$ is lower than the admissible error rate $e$.
\end{definition}

\begin{example}
\label{ex:example2}
Consider the following two claims: 

\noindent ``\emph{The \textbf{market for new wind power projects increased nine-fold from 2000 to 2017}, \underline{while the solar PV market} \underline{expanded} \underline{aggressively}.
}''

The claim in bold is \emph{explicit} and ``nine-fold'' is the parameter. The query should identify this ratio in the relevant data for wind market (VM) and check an equality, i.e., (VM in 2017 $/$ VM in 2000) = 9. 
The underlined claim is \emph{general}, with ``expanded'' being an operation over solar market (SM) yearly values, i.e., (SM in 2017 $/$ SM in 2000) $>$1, and ``aggressively'' a parameter, i.e., (SM in 2017 $/$ SM in 2000) $>$ 100.
\end{example}

General claims are more challenging than explicit ones because of the ambiguity in the language. The problem is domain specific, as an \textit{aggressive} growth in the energy market may not be the same parameter in the financial or in the automotive market.

Assuming a system can identify the comparison and the parameter in the claim, it still has to come up 
with the correct query. We consider a fragment of SQL focused on statistical checks based on a library of functions $F$ that includes aggregate and mathematical SQL functions, possibly combined with arithmetic operators.

\begin{definition}
\label{def:query}
A statistical check SQL query has the form:\\
{\sc select} $f_i$(a.$A_1$, b.$A_2$, \dots) \\
{\sc from} T1 a, T2 b, $\ldots$ \\
{\sc where} a.key1 = $v_1$ {\sc and} (b.key2 = $v_2$ {\sc or} b.key2 = $v_3$) {\sc and} $\ldots$\\

The {\sc where} clause is a conjunction and disjunction
of $n$ unary equality predicates defined over the key attributes of one or more relations in $D$.
The {\sc select} clause is a (possibly nested) combination of functions $f_i, \ldots, f_m \in F$ defined over attributes values and constants.
\end{definition}

In our use case, we observe that the number of possible function combinations (e.g., {\sc power}(a.2017/b.2016,1/(2017-2016)) in Example~\ref{ex:example1}) is in the hundreds. 
As for the parameter discussion, we do not assume that $F$ is fixed in general, as different combinations are used in different domains.

\begin{example}
Consider again the explicit claim in the previous example about the wind market. The claim is validated if there is a query that translates the textual content and returns a value equals to 9. In this case, the query is
\newline
\newline
\noindent\fbox{
\parbox{\linewidth}{
SELECT (a.2017 / b.2000)

FROM  GED a, GED b

WHERE a.Index = `CapAddTotal\_Wind' {\sc and} b.Index = `CapAddTotal\_Wind';
}}
\end{example}

Finally, we aim at minimizing the effort taken by a group of experts of the domain in verifying the claims in the document. A natural metric to measure the effort is the total time to verify all claims and update incorrect ones. 


We are now ready to formally define our problem
\begin{definition}
Given a set of relations $D$ and a document $T$ containing a set of generic textual claims $C$, we want to minimize the human effort in identifying for every claim $c \in C$ either 
(i) a query $q$ and the relations $D' \subseteq D$  s.t. $c$ is labelled as correct, 
or (ii) that there is no query $q$ and dataset $d$ s.t. $c$ is labelled as correct. 
\end{definition}



In the latter case, we also want to report the 
queries over relations in $D$ that make the claim correct. 

\begin{example}
\label{ex:falseClaim}
Consider again the table fragment in Figure~\ref{fig:emissions} and the (false) claim ``In 2017, global electricity demand grew by {2.5\%}". Our system recognizes that there is no query that returns 2.5\% for global electricity growth in 2017, but there is a query on the same subject returning 3\%. We suggest the value as a possible update to the claim.
\end{example}

\begin{figure}[t]
\centering
\includegraphics[scale=.635]{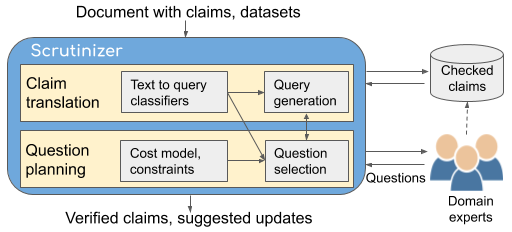}
\vspace{-2.5ex}
\caption{Architecture of \system.}
\label{fig:arch}
\end{figure}

\section{System Overview}
\label{sec:overview}



Figure~\ref{fig:arch} shows an overview of \system. 
The input consists of a text document, containing general claims, and a set of relations. Inspired by our use case, if a database of previously checked claims is available, our system uses it for bootstrapping.
In case such database is not available, we introduce an active learning algorithm to steer the crowd in its creation.
The output of the system is a verification report, mapping verified claims to queries while pointing out mistakes and potential updates to the text. 

The system encompasses two primary components. The automated translation component leverages machine learning to identify the elements that defines every claim, i.e., candidates for datasets, attributes, rows, and comparison operations. 
The question planning component interacts with human domain experts to verify such elements and the checking results, optimizing verification tasks for maximal benefit. 

\begin{algorithm}[ht]
\renewcommand{\algorithmiccomment}[1]{// #1}
\caption{Main verification algorithm.\label{alg:mainAlg}}
\hspace*{\algorithmicindent} \begin{algorithmic}[1]
\State \Comment{Verify claims $C$ in text $T$ using models $M$}
\State \Comment{and return verification results.}
\Function{Verify}{$T,C,M$}
\State \Comment{Initialize verification result}
\State $A\gets\emptyset$
\State \Comment{While unverified claims left}
\While{$C\neq\emptyset$}
\State \Comment{Select next claims to verify}
\State $N\gets$\Call{OptBatch}{$T,C,M$}
\State \Comment{Select optimal question sequence}
\State $S\gets$\Call{OptQuestions}{$N,M$}
\State \Comment{Get answers from fact checkers}
\State $W\gets$\Call{GetAnswers}{$N,M,S$}
\State \Comment{Generate queries and validate claims}
\State $R\gets $\Call{Validate}{$W$}
\State $A\gets W\cup R$
\State \Comment{Remove answered claims}
\State $C\gets C\setminus$\Call{Unanimous}{$N,A$}
\State \Comment{Retrain text classifiers}
\State $M\gets$\Call{Retrain}{$N,A$}
\EndWhile
\State \Comment{Return verification results}
\State \Return{$A$}
\EndFunction
\end{algorithmic}
\end{algorithm}

\begin{figure*}[ht]
\centering
\includegraphics[scale=.325]{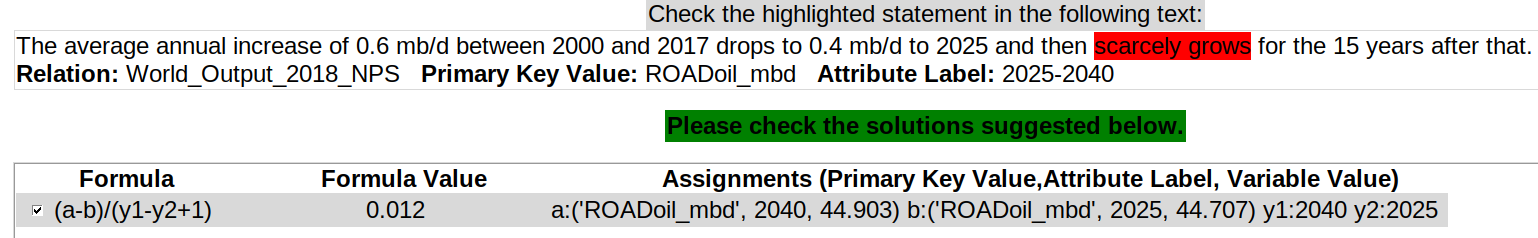}
\vspace{-1ex}
\caption{Example of the generated query (bottom) for the general claim in red in the sentence (top). Below the sentence are reported elements of the claim that have already been validated, such as the database, the key value and the attributes.}
\label{fig:gui}
\vspace{-0.5ex}
\end{figure*}

Algorithm~\ref{alg:mainAlg} describes the main steps in our workflow. Given the claims $C$ in a document $T$ and the ML models, the claims are verified in batches by a team of experts. In each step, the algorithm selects an optimal batch $N$ of claims for verification. Claim batches are selected based on multiple criteria, including expected verification overheads as well as their estimated utility for improving accuracy of the classifiers. For each selected claim in the current batch, we determine an optimal sequence of questions for the human checkers, minimizing expected verification time. Claims are validated or marked as erroneous, based on replies from crowd workers and query evaluations. We remove the claims for which a verification result (i.e., either a verifying query or a decision that the claim is erroneous) can be calculated with sufficiently high confidence. Finally, the classifiers are retrained, based on the newly obtained classification results.


We detail the two main components in the following.

\subsection{Text to Query Translation}
The systems starts by executing four classifiers over the textual claim. We assume the text relevant for the statistical claim has been already identified with one of the existing tools for this task~\cite{JoTY0YLM19}. Given the textual claim, the classifiers identify four elements that are key for the query generation process and claim verification. The first three are basic elements of every query: relevant relations, primary keys values (rows), and attributes names. The fourth classifier is in charge of identifying a generic formula with variables in the place of keys and attribute values. This formula gets instantiated on the dataset at hand and becomes the combination of functions in the SELECT clause. While for explicit claim we always identify the parameter and the comparison, for general claims these two elements can also be predicted within the formula. It is also possible that they cannot be predicted and the user input them answering a question.

\begin{example}
\label{ex:classifiersSimple}
Consider again the (false) claim ``In 2017, global electricity demand grew by \textbf{2.5\%}". Ideally, the first classifier identifies that \emph{global} relations can be used to verify it; the second classifier recognizes that rows reporting values for \emph{electricity demand} should be used; the third classifier returns \emph{2016, 2017} as the attributes of interest, and, finally, the fourth classifier returns the formula {\sc power}$({\frac{a}{b}},{\frac{1}{A_1-A_2}})-1$, with explicit parameter $0.025$ (2.5\%) in the claim (the explicit parameter implies the equality comparison).
\end{example}


To get good accuracy results in the prediction, we resort to active learning. This is in line with our use case, where the previously checked claims are immediately used to derive training data for the classifiers, but also enable the use of our system for cases where previous checks are not available. Previous checks are also important for  generalizing check functions into formulas with variables. This step enable us to (i) reuse formulas on unseen claims and (ii) have a number of classes (for the prediction) as small as possible. 

As we cannot assume that the first prediction is always the correct one in practice, we validate the relations, rows, and attributes predictions with the crowd of domain experts. Once we have this ``context'' information, we predict the top formulas with the last classifier and generate all the possible queries that combine context and formulas. The complexity raised by this combination is in the assignment of the elements of the query to the variables in the formula. Consider two attributes $A_1$ and $A_2$ identified for a certain row and a formula stating that we should compute ``$a-b$'', the system does not know if $A_1$ is assigned to $a$ or $b$. 

\begin{example}
\label{ex:queryGeneration}
Given the predictions for relations (\emph{g1, g2}), rows (\emph{PGElecDemand}), attributes (\emph{2016, 2017}) and formula {\sc power}(${\frac{a}{b}},{\frac{1}{A_1-A_2}})$-1, the query generator module produces all the possible bindings for variables $a,b$ over global relations, for electricity demand rows and with attributes 2016 and 2017. In one assignment, $a$ is bound to a row in relation \emph{g1}, with Index value \emph{PGElecDemand} and attribute \emph{2016}, while in the second assignment $a$ is bound to \emph{g2} and \emph{2016} or \emph{g1} and \emph{2017} and so on. One of these query returns the 3\% parameter in the original claim, thus validating it.
\end{example}


The assignment operation is done in a brute force fashion, but, thanks to the pruning power of the context, it is usually achieved in less than a second.
We describe these components in more detail in Section~\ref{sec:querygen}.

\subsection{Question Planning}
Obtaining feedback from crowd workers is expensive. Hence, the question planning component uses cost-based optimization to determine most effective question sequences. Question planning consists of two sub-tasks. First, for a fixed claim, we choose a sequence of questions allowing us to verify that claim with minimal expected overhead. Each question either solicits crowd workers to verify automatically generated query fragments, or to propose suitable query fragments themselves. Second, we need to decide the order in which claims are verified. When selecting claims to verify next, we take into account expected verification overheads as well as their value as training samples for our classifiers (used for automated claim verification). We describe this component in more detail in Section~\ref{sec:crowd}.

\begin{example}
\label{ex:gui}
Figure~\ref{fig:gui} shows at the bottom the query generated after a group of relations, a key value and two attributes have been validated for the general claim at the top. The domain expert can examine the formula that has been predicted (left), its assignment over all the relations that contain the key value and attributes (right), and the resulting value (0.012 in the example) for verifying the claim. 
\end{example}

Notice that in the example above the parameter is not predicted by the formula, the user has to assess if 0.012 is correctly described by ``scarcely".

We remark that our system is designed for a setting with many claims that need to be verified by a team of checkers. If this is not the case, it does not add an extra cost but the effort in training the classifiers would not be visible.

\section{Claim Translation}
\label{sec:querygen}

We first describe how we preprocess the claims to extract the features to be used with the classifiers. We then describe 
the query generation step.


\subsection{Claim Preprocessing}
Given a text, we start by processing it to identify worth checking claims with existing tools~\cite{HassanALT17,JaradatGBMN18}.
Given a claim (sequence of words) it is necessary to identify the correct relations from the corpus. In such relations, we need to identify the primary key values and the attributes that identify the data values to be used in the check operation. For these three tasks, we rely on (\textit{GloVe}) pre-trained embeddings to convert the text to a distributed representation which maps each word to a real-valued vector~\cite{pennington2014glove}. 
To get the embedding of a sentence, we average the embedding of each word in that sentence. For each claim in a sentence, we concatenate the sentence embedding with the TF-IDF scores of the unigrams and bigrams in the claim, followed by the TF-IDF scores of every 3 characters. 

\begin{figure}[th]
\vspace{-0.5ex}
\centering
\includegraphics[scale=.55]{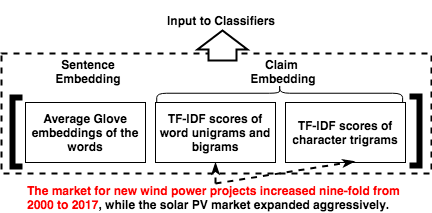}
\vspace{-2ex}
\caption{Preprocessing of the claims.}
\label{fig:embeddings}
\vspace{-0.5ex}
\end{figure}

As depicted in Figure~\ref{fig:embeddings}, embeddings for the sentence and the claim are fed as multi-dimensional vectors to the four classifiers responsible for predicting a fragment of the final query. One classifier is used to predict a list of possible relations that are used to verify the claim. Another classifier predicts a list of primary key values that are relevant to the claim. A third classifier predicts a list of possible attribute labels. The final classifier predicts a list of possible operations. If the claim is explicit, 
we identify the parameter $p$ directly from the sentence with a syntactical parsing. 






\subsection{From Claims to Formulas}
Given the large variety of possible statistical checks, we do not rely on a pre-defined library of operations and their possible combinations, but learn them from the previously checked claims.
Given a previously checked claim, we describe how we turn it into a generic formula with variables, with the goal of reusing it with unseen claims. A classifier is trained with (claim, formula) pairs and returns a ranked list of formulas for a given textual claim.

\begin{example}
\label{ex:formula}
A query with\\ 
SELECT POWER(a.2017/b.2016,1/(2017-2016))-1\\
identifies a formula
POWER($a$.$A_1$/$b$.$A_2$,1/($A_1$-$A_2$))-1
\end{example}

Example~\ref{ex:formula} shows the translation of the Select clause of a query into a formula. The formula contains variables for the relations and for the attributes, but preserves function names, operations, and constants. The variables make the check reusable for a new claim, assuming that the classifier returns the correct formula for it. In a formula, one of the operations and one of the constants can play the role of comparison $op$ and parameter $p$, respectively, for claims that are not explicit.

Given annotations with SQL queries, the process to obtain a formula is straightforward.
Unfortunately, going from previous checks to formulas is challenging because we do not assume SQL queries in the annotations. In fact, in our use case, checkers used spreadsheets and notes in natural language to annotate their verification process. The lack of rigorous guidelines raises three problems.

\vspace{1ex}
\noindent \textbf{Reconstruction.}
We call \textit{look-up} the function that retrieves \textit{data values} from the relations (Select and Project in SQL). Given a relation, each data value is identified by its primary key value (e.g., ``PGElecDemand'') and its attribute name (e.g., ``2017'' or ``Total''). Data values in a claim can be collected from different relations. 
Check operations range from simple look-ups in the relations to compositions of SQL functions and operations, possibly with constants. 
Moreover, values can be results of other intermediate operations. Any value involved in an operation may be obtained from operations such as subtraction, multiplication, or even compound annual growth rate\footnote{\url{https://en.wikipedia.org/wiki/Compound\_annual\_growth\_rate}} that involve looking-up other values. 

Any value might be the result of several operations, 
thus formulas contain the entire sequence of operations. We achieve this by recursively replacing each value by its corresponding function in the annotations until we reach a look-up. As attribute labels are present in some formulas as values, we also replace them with attribute variables.

\vspace{1ex}
\noindent \textbf{Ambiguity.}
A complication from the lack of guidelines is that checkers verify the same claims with different operations. Even for simple explicit claims, one checker may write a Boolean query and see if the output is empty, while another may collect a value from the data and compare it visually with the parameter. The problem gets harder with general claims, as in the following example.

\begin{example}
\label{ex:ambiguity}
Consider an explicit claim stating that the consumption of some resource $r$ in a certain year $y$ has been ``very high". One checker may verify if with a Boolean query:
\newline
\newline
\noindent\fbox{
\parbox{\linewidth}{
SELECT d.$y > 100$
FROM  rel
WHERE d.key=$r$ 
}}
\newline
but a second checker may verify the claim with query:\\
\newline
\noindent\fbox{
\parbox{\linewidth}{
SELECT d.$y$ FROM  rel WHERE d.key=$r$
}}\newline
and marking the claim as correct based on some parameter that is neither in the claim nor in the query.
\end{example}

\vspace{1ex}
\noindent \textbf{Incomplete information.}
The second check in Example~\ref{ex:ambiguity} shows a case of incomplete annotation for a general claim. This is quite common in our experience, as also shown in Example~\ref{ex:example1} and
in the formula reported in Figure~\ref{fig:gui}: only a human can conclude that 0.012 is an appropriate value for this domain and claim to validate ``scarcely''.
While for explicit claims is not an issue, as the comparison and the parameter are in the claim, incomplete annotations lead make it impossible to even replicate a past check on the same general claim. 
This problem clearly motivates our human-in-the-loop solution, detailed in Section~\ref{sec:crowd}.

\vspace{1ex}
Due to the first two problems above, it is hard to generate formulas in practice, as it is reflected by our experimental results with the prediction of formulas from the text. Fortunately, we use information from the claim data to better identify the correct formula, as we discuss next.

\begin{algorithm}[t]
\renewcommand{\algorithmiccomment}[1]{// #1}
\caption{Query generation algorithm.}
\label{alg:query_gen_algo}
\hspace*{\algorithmicindent} \begin{algorithmic}[1]
\State \Comment{Given relations $R$, key $K$, attributes $A$, formulas $F$,}
\State \Comment{parameter $p$ (for explicit claim), returns queries}
\Function{GenerateQueries}{$R$,$K$,$A$,$F$,$p$}
\State $V, S, S_A \gets\emptyset$
\State \Comment{Collect data value assignments $V$}
\For {$r \in R$, $k \in K$, $a \in A$}
\State $V \gets V \cup$ \Call{GetValue}{$r,k,a$}
\EndFor
\For{$f \in F$}
\State \Comment{Get \# of non-attribute variables in formula}
\State  $n \gets$\Call{GetVars}{$f$}  
\State  $P \gets$\Call{GetPerm}{$V,n$}  \For{$i \in P$}
\State \Comment{Test approx. value match for explicit claim}
\If{$p \neq \emptyset$ \textbf{and} $f(i) \approx p$}
\State $S \gets S \cup i$ 
\ElsIf{$S \neq \emptyset$} 
\State $S_A \gets S_A \cup i$
\EndIf
\EndFor
\EndFor
\State \Comment{Rewrite variables and assignments as queries}
\If{ $S \neq \emptyset$}  
\State $Q \gets $ \Call{Rewrite}{$R,K,A,S$}
\State \Call{Return}{$Q$}
\Else
\State $Q_A \gets$ \Call{Rewrite}{$R,K,A,S_A$}
\State  \Call{Return}{$Q_A$}
\EndIf
\EndFunction
\end{algorithmic}
\end{algorithm}


\subsection{Query Generation}
We describe the query generation process in Algorithm~\ref{alg:query_gen_algo}. The input of the algorithm is the output of the classifiers and the parameter $p$ if the claim is explicit.
To generate the candidate queries, we first get all the possible values from the combinations of the classifier outputs for relations, primary key values, and attribute labels (line \textbf{7}). Then, we loop through the list of formulas (line \textbf{9}), and for each formula, we get the number of possible permutations (line \textbf{12}) of the possible values. We then try each permutation (line \textbf{13}) to see if it leads to a match for the explicit claim 
and eventually store it as a solution (line \textbf{16}). If we did not find a valid solution or the claims was not explicit, then we store the solution in a different list (line \textbf{18}). After looping through the formulas, if a solution was found for the explicit claim, we produce the queries associated to these solutions (line \textbf{24}). In all other cases, 
we produce queries for all solutions (line \textbf{27}). 
In the rewriting, we fill up a query template with the relations, key values, attribute labels and formula instantiated. The query template is an SQL string with placeholders, as described in Definition~\ref{def:query}. Note that we generate \textit{(SELECT-PROJECT-AGGREGATE)} queries 
that can span multiple relations.
Finally, we return all queries (lines \textbf{25}, \textbf{28}).



The algorithm assumes that the input information for relations, key values and attributes are correct as these come from the crowd validation, as described in the next section. Formulas are not validated by the crowd as returned by the classifier, but only after they have been filtered in the instantiation loop in the algorithm.



\begin{example}
Consider the following input to the algorithm:

\vspace{0.3ex}
\noindent
\textbf{Relations}: GED; 
\textbf{Keys}: PGElecDemand,              
\textbf{Attributes}: 2016, 2017;
\textbf{Formulas}: Power$(\frac{a}{b}, {\frac{1}{A_1-A_2}})-1$, $(a+b)>0$, \ldots 



After instantiating the first formula and 
replacing the query template, we obtain:
\newline
\newline
\noindent\fbox{
\parbox{\linewidth}{
SELECT POWER(a.2017/b.2016,1/(2017-2016)) -1

FROM  GED a, GED b

WHERE a.Index = `PGElecDemand', b.Index = `PGElecDemand'
}}
\end{example}

In the last step of the workflow, the queries are executed and the results displayed to the user to draw conclusions on the claim, as depicted in Figure~\ref{fig:gui}.

\section{Question Planning}
\label{sec:crowd}
Question planning consists of two tasks: determining optimal questions to verify single claims, and determining an optimal verification order between claims. We discuss the first problem in Section~\ref{sub:verification} and the second one in Section~\ref{sub:ordering}.

\subsection{Single Claim Verification}
\label{sub:verification}
We verify claims by asking a series of questions to human fact checkers. Our goal is to minimize overheads for the fact checkers. To do so, we leverage the results of our claim to query translation components. In the ideal case, we have identified a query that translates the current claim with high confidence. In that case, crowd workers only need to verify the proposed translation. This is typically faster than verifying the claim manually.

In practice, we are not always able to find a high-confidence translation for a claim. Instead, we may still be able to narrow down the range of possibilities to a small set of alternatives. If this is not possible for the query as a whole, we may still be able to do so for specific query properties (e.g., we identify specific columns that appear in the query with high confidence). In those cases, we can ask crowd workers to verify assumptions about specific query properties, or to select answers from a small set of options. Of course, answering questions on query properties or selecting answers causes overheads as well. Our goal is to select the sequence of questions that minimize expected verification cost.

For each claim, we generate a series of screens. Each screen contains questions that are answered by a crowd worker. Each screen is associated with one specific query property (e.g., the presence of specific columns or tables). On the upper part of each screen, crowd workers are shown a set of answer options with regards to the current property. Those answer options are obtained from our classifiers. On the lower part of each screen, crowd workers have the option to suggest new options, if the correct answer is not on display. The final screen for each claim asks directly for the query translating the current claim. Answers to prior questions may have allowed us to narrow down the range of possible queries. If so, the chances for confronting workers with the correct query increase.

In this scenario, our search space for question planning is the following. First, we need to decide how many screens to show. Second, we need to determine what query properties our questions should focus on. Third, we need to decide how many answer options to display on each screen. Fourth, we need to pick those answer options. 

We make those decisions based on a simple cost model, representing time overhead for crowd workers for verifying the current claim. We assume that workers read screen content from top to bottom. For each answer option, a worker needs to determine whether it is correct or not. We count a per-option verification cost in our model, distinguishing cost of verifying answers about query properties, $v_p$, from the cost of verifying the full query (on the final screen), $v_f$. We choose constants such that $v_p\ll v_f$ to account for the fact that full queries are significantly longer than their fragments (which increases reading time and therefore verification cost). If none of the given options applies, crowd workers must suggest an answer themselves. We denote by $s_p$ and $s_f$ the cost of suggesting answers for properties and queries (again, $s_p\ll s_f$). 

First, we discuss how to choose the number of screens and answer options. We denote the number of screens by $nsc$ and the number of options by $nop$. Predicting the precise verification cost for specific choices of those parameters is not possible. Doing so would require knowing the right solution to each question (as it determines how many options workers will read). However, we can upper-bound verification cost in relation to the cost of verifying claims without \system.

\begin{theorem}
Compared to the baseline, relative verification overhead of \system is at most $(nop\cdot v_f+nsc\cdot(v_p+s_p))/s_f$.\label{th:claimcost}
\end{theorem}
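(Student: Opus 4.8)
The plan is to bound the worst-case verification cost of \system{} for a single claim by summing, over all screens, the maximum possible reading-and-answering cost a worker can incur, and then divide by the cost $s_f$ of the baseline (which is the cost of directly suggesting the full query with no assistance). I would first fix the cost accounting: each of the $nsc$ property screens contributes at most the cost of reading all $nop$ answer options, i.e. $nop\cdot v_p$, plus, in the worst case, the cost $s_p$ of the worker having to suggest an answer when none of the displayed options is correct; the single final screen contributes at most $nop\cdot v_f + s_f$ by the same reasoning but with the full-query constants. However, since the stated bound does not contain an $s_f$ term in the numerator and uses $nop\cdot v_f$ rather than $nsc\cdot v_f$, the intended argument is evidently that we only pay $v_f$ on the one final screen, so the full-query verification cost is $nop\cdot v_f$, and we charge $v_p+s_p$ per property screen (reading one option plus a possible suggestion, or amortizing the option-reading into $v_p$), giving total absolute cost at most $nop\cdot v_f + nsc\cdot(v_p+s_p)$; dividing by the baseline $s_f$ yields the claimed relative overhead.

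The key steps, in order, would be: (1) identify the baseline cost as $s_f$ — a worker without \system{} must compose the verifying query from scratch, which in the model costs exactly $s_f$; (2) decompose \system's per-claim cost into the contribution of the $nsc$ property screens and the one final query screen; (3) upper-bound each property screen's contribution by $v_p + s_p$ (the cost of processing the relevant option plus the contingency of suggesting a missing one) and the final screen's contribution by $nop\cdot v_f$ (processing up to $nop$ candidate queries, one of which is hopefully correct so no $s_f$ suggestion is needed in the bounded case); (4) sum to get absolute overhead $\le nop\cdot v_f + nsc\cdot(v_p+s_p)$; (5) divide by $s_f$ to convert to relative overhead.

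The main obstacle is pinning down exactly which cost terms the authors intend to charge on which screen, because the asymmetry in the stated formula ($nop$ multiplying $v_f$ but $nsc$ multiplying $v_p+s_p$) implies a specific, slightly non-obvious accounting convention: the $nop$ options are only fully ``read'' on the final screen, whereas on property screens the per-screen cost is collapsed to a single $v_p+s_p$ term. I would need to reconcile this with the earlier prose, which says workers read each screen top to bottom and incur a per-option cost $v_p$ for every property option — suggesting the honest worst case per property screen is $nop\cdot v_p + s_p$, not $v_p+s_p$. The resolution is presumably that the bound is meant to be interpreted with $v_p$ already representing the aggregate cost of the options on a property screen (since $v_p\ll v_f$ and property screens have few options), or that a tighter amortized argument over the sequence of screens is being invoked; I would make this convention explicit at the start of the proof. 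Everything after that convention is fixed is routine arithmetic, so the real work is in stating the cost model precisely enough that the inequality becomes immediate.
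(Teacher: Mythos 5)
Your proposal matches the paper's proof essentially verbatim: the paper charges $nop\cdot v_f$ for reading answer options on the final screen, $nsc\cdot(v_p+s_p)$ for the preceding property screens, and divides by the baseline cost $s_f$ of suggesting a query unaided. The accounting convention you flag as the main obstacle is resolved exactly as you guessed --- the paper simply charges $v_p+s_p$ per property screen without an $nop$ factor and does not comment on the tension with its per-option cost model, so your reconstruction is, if anything, more explicit about that point than the original.
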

\begin{proof}
Reading through answer options on the final screen adds cost overheads of $nop\cdot v_f$ in the worst case. We have overheads of $nsc\cdot (v_p+s_p)$ for all previous screens. Verifying the claim without help means suggesting a query for the current claim. This has cost $s_f$ in our model. 
\end{proof}

\begin{corollary}
Setting $nop=s_f/v_f$ and $nsc=s_f/(v_p+s_p)$ limits verification overheads to factor three.
\end{corollary}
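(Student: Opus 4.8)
The plan is to derive the corollary directly by substituting the proposed values of $nop$ and $nsc$ into the bound from Theorem~\ref{th:claimcost} and checking that the resulting expression evaluates to $3$. This is a one-line calculation, so the "proof" is really just making the substitution explicit and verifying that no hidden assumption is violated.

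First I would write out the bound from Theorem~\ref{th:claimcost}, namely $(nop\cdot v_f + nsc\cdot(v_p+s_p))/s_f$, and then plug in $nop = s_f/v_f$ and $nsc = s_f/(v_p+s_p)$. The first term becomes $(s_f/v_f)\cdot v_f = s_f$, and the second term becomes $(s_f/(v_p+s_p))\cdot(v_p+s_p) = s_f$. Summing gives $2s_f$ in the numerator, so the relative overhead is $2s_f/s_f = 2$. Here I notice a small discrepancy: the clean arithmetic yields factor $2$, not $3$. The factor-three phrasing presumably accounts for the baseline cost itself (relative \emph{total} cost, i.e., $1$ for doing the work plus $2$ for the overhead, giving $3$), or for rounding $nop$ and $nsc$ up to integers, or for a constant-factor slack in the informal "at most" of the theorem. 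I would state the computation honestly as giving an overhead of at most $2\cdot s_f$ and then remark that, interpreting the statement as a bound on total relative cost (overhead plus the unavoidable baseline work), this is a factor of three; alternatively one absorbs the integrality rounding into the constant.

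The only genuine step beyond arithmetic is checking well-definedness: the substitution requires $v_f > 0$ and $v_p + s_p > 0$, both of which hold since all four cost constants are positive in the model (indeed $v_p \ll v_f$ and $s_p \ll s_f$ were assumed). One should also note $nop$ and $nsc$ so chosen are $\geq 1$ — this follows from $v_f \leq s_f$ and $v_p + s_p \leq s_f$, which are consequences of the modelling assumptions $v_p \ll s_f$, $s_p \ll s_f$ and the fact that verifying a single query is no more costly than suggesting one from scratch ($v_f \leq s_f$), so the choices are feasible integers (after rounding up).

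The main obstacle, then, is not mathematical difficulty but the cosmetic mismatch between the exact constant $2$ produced by the algebra and the "factor three" claimed in the corollary; the proof needs to spell out which accounting convention (total cost vs.\ pure overhead, or integrality slack) reconciles the two, so that the statement as written is justified rather than silently contradicted.
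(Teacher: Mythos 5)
Your proposal is correct and takes the same route as the paper, which simply says the corollary ``follows immediately by substituting the proposed formulas'' into Theorem~\ref{th:claimcost}. Your extra care about the $2$-versus-$3$ discrepancy is warranted and your resolution is the intended one: the substitution bounds the \emph{overhead} by $2\cdot s_f$, and adding the unavoidable baseline cost $s_f$ (the worst case where the worker must still suggest the query after reading all options) gives total cost at most $3\cdot s_f$, i.e., factor three relative to manual verification.
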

\begin{proof}
This follows immediately by substituting the proposed formulas in the equations from Theorem~\ref{th:claimcost}.
\end{proof}

We will use the aforementioned setting for most of our experiments. Having determined the number of screens and options, we still need to pick specific screens and answers. First, we discuss the selection of answer options. Note that the worst-case verification cost of a property depends only on the number of options shown (but not on the options themselves). Hence, to pick options, we consider expected verification cost instead. 

We calculate expected verification cost based on our classifiers, assigning specific answer options to a probability. For a fixed property, denote by $A$ the set of all relevant answer options. Also, denote by $p_a$ the probability that an answer $a\in A$ is correct. We calculate expected verification cost when presenting users with an (ordered) list of answer options $\langle a_1,\ldots,a_m\rangle$ where $a_i\in A$. 

\begin{theorem}
The expected verification cost for answer options $\langle a_1,\ldots,a_m\rangle$ is $v_p\cdot\sum_{i=1..m}(1-\sum_{1\leq j< i}p_{a_i})$.\label{th:verificationExp}
\end{theorem}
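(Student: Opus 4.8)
The plan is to compute the expectation directly by summing, over each displayed option $a_i$, the probability that a worker actually reaches and reads that option, multiplied by the per-option cost $v_p$. The key modeling observation is that a worker reads options top to bottom and stops as soon as the correct answer is encountered; therefore option $a_i$ is read precisely when none of the earlier options $a_1,\ldots,a_{i-1}$ was the correct one. Since exactly one answer in $A$ is correct, the probability that the correct answer is among $\{a_1,\ldots,a_{i-1}\}$ is $\sum_{1\leq j<i}p_{a_j}$, so the probability that $a_i$ gets read is $1-\sum_{1\leq j<i}p_{a_j}$.

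From there the argument is just linearity of expectation: write the total verification cost as $\sum_{i=1..m} v_p\cdot X_i$, where $X_i$ is the indicator that option $a_i$ is read, take expectations term by term, and substitute $\mathbb{E}[X_i]=1-\sum_{1\leq j<i}p_{a_j}$. This yields $v_p\cdot\sum_{i=1..m}\bigl(1-\sum_{1\leq j<i}p_{a_j}\bigr)$, which matches the claimed formula (the paper writes $p_{a_i}$ inside the inner sum, but the running index there is clearly meant to be $j$, so I would state it with $p_{a_j}$ and note the typo if needed).

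The only subtlety — and the place I would be most careful — is the boundary behavior and the implicit assumptions. For $i=1$ the inner sum is empty, giving probability $1$, which is correct since the first option is always read. I would also make explicit the two assumptions the cost model relies on: (i) exactly one option in $A$ is the correct answer (so the $p_a$ sum over all of $A$ is at most $1$, and the events ``$a_j$ is correct'' are mutually exclusive), and (ii) the worker incurs cost $v_p$ for each option inspected up to and including the correct one. Under (i) and (ii) there is no double counting and the per-option read events telescope cleanly. Strictly speaking one should also address the case where the correct answer is not in $\langle a_1,\ldots,a_m\rangle$ at all (the worker then reads all $m$ options and proceeds to suggest an answer); but that suggestion cost is accounted separately in the model, and reading all $m$ options is exactly what the formula gives when $\sum_{1\leq j<i}p_{a_j}<1$ for every $i$, so the expression is still consistent.

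So the proof is essentially a two-line calculation: decompose cost into per-option indicators, apply linearity of expectation, and evaluate each indicator's expectation using the fact that reaching option $i$ is equivalent to the correct answer lying strictly below position $i$. No real obstacle here; the ``hard part'' is merely stating the modeling assumptions cleanly enough that the telescoping is unambiguous.
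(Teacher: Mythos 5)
Your proof is correct and follows essentially the same route as the paper's: the paper likewise weights each per-option verification cost $v_p$ by the probability that the worker must read that option (i.e., that none of the earlier options was correct), which is exactly your linearity-of-expectation decomposition stated more explicitly. Your observations about the mutual-exclusivity assumption, the empty sum at $i=1$, and the $p_{a_i}$ versus $p_{a_j}$ index typo are all consistent with (and slightly sharper than) the paper's presentation.
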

\begin{proof}
We consider the case that at most one answer option is accurate (this case is typical). The cost of verifying one answer option is $v_p$ (assuming properties). The probability that workers need to read beyond the $i$-th option is the probability that none of the first $i$ options is correct: $\Pr(\text{$a_1$ to $a_i$ incorrect})=1-\sum_{1\leq j\leq i}p_{a_i}$. The expected cost is the cost of each verification, weighted by the probability that it is necessary: $v_p\cdot\sum_{i=1..m}(1-\sum_{1\leq j< i}p_{a_i})$.
\end{proof}

\begin{corollary}
Selecting answer options in decreasing order of probability minimizes expected verification cost.
\end{corollary}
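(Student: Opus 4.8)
The plan is to prove this as an immediate consequence of Theorem~\ref{th:verificationExp}: the expected verification cost for an ordered list $\langle a_1,\ldots,a_m\rangle$ is a fixed function of the ordering, and I would show that sorting the $p_a$ in decreasing order is the unique (up to ties) minimizer. The first step is to rewrite the cost from Theorem~\ref{th:verificationExp} in a form that makes the dependence on the ordering transparent. Expanding $v_p\cdot\sum_{i=1}^{m}\bigl(1-\sum_{1\leq j<i}p_{a_j}\bigr)$ and swapping the order of summation, each probability $p_{a_j}$ appears with multiplicity equal to the number of indices $i>j$, i.e.\ $m-j$. Hence the cost equals $v_p\cdot\bigl(m-\sum_{j=1}^{m}(m-j)\,p_{a_j}\bigr)$, so minimizing cost is equivalent to maximizing $\sum_{j=1}^{m}(m-j)\,p_{a_j}$, a weighted sum in which the weight $m-j$ is strictly decreasing in the position $j$.

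The second step is the standard rearrangement argument: to maximize the pairing of a fixed decreasing weight sequence $(m-1,m-2,\ldots,1,0)$ with a multiset of values $\{p_a\}$, one must assign the largest value to the largest weight, the second largest to the second largest weight, and so on. I would make this rigorous with an exchange argument: suppose in some ordering there exist positions $j<k$ with $p_{a_j}<p_{a_k}$; swapping $a_j$ and $a_k$ changes the objective by $(m-j)(p_{a_k}-p_{a_j})+(m-k)(p_{a_j}-p_{a_k})=(k-j)(p_{a_k}-p_{a_j})>0$, strictly increasing the objective and therefore strictly decreasing the expected cost. Thus any ordering that is not sorted in decreasing order of $p_a$ can be improved, which establishes that decreasing order is optimal.

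I do not expect any real obstacle here — the result is essentially a one-line corollary of the explicit cost formula combined with rearrangement. The only modeling caveat worth noting is the same one already made in Theorem~\ref{th:verificationExp}: the argument assumes at most one answer option is correct, so that the probabilities act independently of the ordering in the way the cost formula requires; under that assumption the conclusion is exact, and ties among the $p_a$ simply mean the minimizer is not unique. If one wanted to be fully careful, the final remark would be that when several orderings achieve the minimum they differ only by permuting equal-probability options, which does not affect the expected cost.
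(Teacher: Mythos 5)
Your proposal is correct and follows essentially the same route as the paper: both start from the cost formula of Theorem~\ref{th:verificationExp} and conclude that placing higher-probability options first reduces every term. The paper's own justification is a single informal sentence, whereas you make it rigorous by rewriting the cost as $v_p\bigl(m-\sum_{j=1}^{m}(m-j)p_{a_j}\bigr)$ and applying a standard exchange/rearrangement argument --- a worthwhile tightening, but not a different approach.
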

\begin{proof}
Each term in the cost formula, proven in Theorem~\ref{th:verificationExp}, decreases if the sum of probabilities of the first options increases. Hence, starting with higher probability choices decreases cost.
\end{proof}

Finally, we discuss the selection of query properties. Our goal is to select the best $nsc$ properties to verify by creating corresponding screens. 

We define the quality of a property as follows. At any point (before verification), we consider a set of possible query translations for a claim. A large set of possible query translations is problematic for two reasons. First, it leads to higher computational overheads when executing them to obtain tentative result. Second, we increase overheads for fact checkers who may be presented with a large number of alternatives. A good property has high pruning power with regards to the current set of candidates. This means that it allows us to discard as many incorrect candidates as possible. 

How many query candidates we can prune depends on the actual property value. Depending on the answer we obtain from the fact checkers, more or less queries can be pruned. We do, of course, not know the correct answers when selecting questions. Hence, we define the expected pruning power of a set of properties as follows. 

\begin{definition}
Given a set $Q$ of query candidates, a set $S$ of query properties to verify, and trained models $M$ predicting a-priori probabilities for possible answers, we define the pruning power $\mathcal{P}(S,Q,M)$ as the expected number of queries that are excluded by obtaining answers for $S$. 
\end{definition}

Next, we provide a formula for pruning power, based on simplifying assumptions. For that, we denote by $a_s^i$ the $i$-th answer option for property $s\in S$ and by $E_s^i\subseteq Q$ queries that are excluded if answer option $a_s^i$ turns out to be correct.

\begin{theorem}\label{th:pruningPower}
The pruning power $\mathcal{P}(S,Q,M)$ is given by $\sum_{q\in Q}(1-\prod_{s\in S}\sum_{i:q\notin E_s^i}\Pr(\text{$a_s^i$ correct}|M))$.
\end{theorem}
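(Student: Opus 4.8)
The plan is to compute the pruning power directly from its definition as an expectation, using linearity of expectation over the query candidates. For each query candidate $q \in Q$, introduce the indicator random variable $X_q$ that equals $1$ if $q$ is excluded after obtaining answers for all properties in $S$ and $0$ otherwise. Then the total number of excluded queries is $\sum_{q \in Q} X_q$, and by linearity of expectation $\mathcal{P}(S,Q,M) = \sum_{q \in Q} \Pr(q \text{ is excluded})$. So the crux is to compute, for a single fixed $q$, the probability that $q$ survives (is \emph{not} excluded), since $\Pr(q \text{ excluded}) = 1 - \Pr(q \text{ survives})$.

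Next I would unfold what it means for $q$ to survive. The query $q$ is excluded as soon as \emph{some} property $s \in S$ receives an answer option $a_s^i$ with $q \in E_s^i$. Equivalently, $q$ survives iff for \emph{every} property $s \in S$, the correct answer option for $s$ is one of those that does not exclude $q$, i.e.\ lies in $\{a_s^i : q \notin E_s^i\}$. For a single property $s$, the probability that the correct answer falls in this surviving set is $\sum_{i : q \notin E_s^i} \Pr(a_s^i \text{ correct} \mid M)$, where the probabilities are read off from the trained models $M$. Invoking the simplifying assumption that the answers to distinct properties are independent (the ``simplifying assumptions'' the theorem statement refers to), the probability that $q$ survives all properties simultaneously is the product $\prod_{s \in S} \sum_{i : q \notin E_s^i} \Pr(a_s^i \text{ correct} \mid M)$. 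Substituting back gives
\[
\mathcal{P}(S,Q,M) = \sum_{q \in Q}\Bigl(1 - \prod_{s \in S}\sum_{i : q \notin E_s^i}\Pr(a_s^i \text{ correct}\mid M)\Bigr),
\]
which is exactly the claimed formula.

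The main obstacle — really the only non-routine point — is being explicit about which simplifying assumptions make the product factorization valid. Two assumptions are doing work here: first, independence across properties, so that the joint probability of the correct answers for all $s \in S$ factors into a product of marginals; and second, that for each property exactly one answer option is correct (so the events $\{a_s^i \text{ correct}\}$ partition the probability space for property $s$ and the inner sum is a genuine probability of the event ``the correct answer for $s$ does not exclude $q$''). I would state both assumptions before the computation, mirroring the style of the adjacent Theorem~\ref{th:verificationExp} where the ``at most one correct answer'' assumption is invoked. Everything after that is a mechanical application of linearity of expectation and the complement rule, so no heavy calculation is needed.
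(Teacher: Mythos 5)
Your proof is correct and follows essentially the same route as the paper's: linearity of expectation over queries, the complement rule for the survival probability, independence across properties to factor the product, and mutual exclusivity of answer options within a property to justify the inner sum. Your version is slightly more explicit about the indicator-variable setup and about naming the two simplifying assumptions, but the substance is identical.
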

\begin{proof}
The pruning power is given as the expected number of pruned queries: $\sum_{q\in Q}\Pr(\text{$q$ is pruned})$. Clearly, it is $\Pr(\text{$q$  pruned})=1-\Pr(\text{$q$ not pruned})$. We simplify by assuming independence between properties and obtain \[\Pr(\text{$q$ not pruned})=\prod_{s\in S}\Pr(\text{$q$ not pruned by $s$}|M)\]. Furthermore, we assume that different answer options for the same property are mutually exclusive. Then, we obtain \[\Pr(\text{$q$ not pruned by $s$})=\sum_{i:q\notin E_s^i}\Pr(\text{$a_s^i$ correct}|M)\]. Substitution yields the postulated formula.
\end{proof}

Next, we discuss the question of how to find property sets maximizing above formula. Iterating over all possible property sets is possible but expensive (exponential complexity in the number of properties). Instead, we select properties according to a simple, greedy approach. At each step, we add whichever property maximizes pruning power to the set of selected properties (when comparing properties to add, we calculate pruning power for the union between the new and previously selected properties). We stop once the number of selected properties has reached the threshold determined before. While this algorithm may seem simple, it offers surprisingly strong formal guarantees. Those guarantees are derived from the fact that pruning power is a sub-modular function~\cite{Nemhauser1978}. We define sub-modularity below. 

\begin{definition}
A set function $f:S\mapsto\mathbb{R}$ is sub-modular if, using $\Delta_f(S,s)=f(S\cup\{s\})-f(S)$, it is $\Delta_f(S_1,s)\geq\Delta_f(S_2,s)$ for any $S_1\subseteq S_2$.
\end{definition}

Intuitively, sub-modularity captures a ``diminishing returns'' behavior. If adding more elements to a set, the utility of new elements decreases as the set of previous elements grows. The pruning power function is sub-modular as well, according to the following theorem.

\begin{theorem}
Pruning power is sub-modular.\label{th:pruningSubmodular}
\end{theorem}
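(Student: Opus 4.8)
The plan is to show sub-modularity directly from the closed-form expression for $\mathcal{P}(S,Q,M)$ given in Theorem~\ref{th:pruningPower}. Since $\mathcal{P}(S,Q,M)=\sum_{q\in Q}\left(1-\prod_{s\in S}\pi_s(q)\right)$ where $\pi_s(q)=\sum_{i:q\notin E_s^i}\Pr(a_s^i\text{ correct}\mid M)\in[0,1]$, and since a non-negative combination of sub-modular functions is sub-modular, it suffices to prove that for each fixed query $q$ the function $g_q(S)=1-\prod_{s\in S}\pi_s(q)$ is sub-modular. I would therefore reduce the whole statement to this per-query claim and argue the reduction explicitly (linearity of $\Delta_f$ in $f$, and the fact that the empty product is $1$ so $g_q(\emptyset)=0$).

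For the per-query claim, fix $q$ and abbreviate $x_s=\pi_s(q)\in[0,1]$. For a new property $s\notin S_2$, the marginal gain is $\Delta_{g_q}(S,s)=\left(\prod_{t\in S}x_t\right)-\left(\prod_{t\in S}x_t\right)x_s=(1-x_s)\prod_{t\in S}x_t$. Now take $S_1\subseteq S_2$ (both not containing $s$): since every $x_t\in[0,1]$, the product over the larger index set $S_2$ is no larger than the product over $S_1$, i.e. $\prod_{t\in S_2}x_t\leq\prod_{t\in S_1}x_t$. Multiplying both sides by the non-negative factor $(1-x_s)$ gives $\Delta_{g_q}(S_2,s)\leq\Delta_{g_q}(S_1,s)$, which is exactly the defining inequality of sub-modularity from the definition preceding the theorem. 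Summing this inequality over $q\in Q$ (all coefficients being $1\geq 0$) yields sub-modularity of $\mathcal{P}(\cdot,Q,M)$.

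The only subtlety — and the place I would be most careful — is making sure the hypotheses under which Theorem~\ref{th:pruningPower} was derived (independence across properties, mutual exclusivity of answer options within a property, at most one correct answer) are carried along, so that the closed form is actually valid; and checking the boundary case $S=\emptyset$ where the product is empty and equals $1$, so $\mathcal{P}(\emptyset,Q,M)=0$, which keeps the marginal-gain computation well defined even when $S_1=\emptyset$. A secondary point worth a sentence is that $\pi_s(q)\le 1$ really does hold: it is a partial sum of the probabilities of the mutually exclusive answer options for $s$, hence bounded by $1$; this is what makes the products monotone non-increasing in the index set. Everything else is the routine algebra above, which I would state but not belabor. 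With sub-modularity in hand, the greedy guarantee of Nemhauser et al.~\cite{Nemhauser1978} (the $(1-1/e)$ bound) applies to the greedy property-selection procedure, which is the point of stating the theorem.
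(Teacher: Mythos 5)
Your proof is correct and follows essentially the same route as the paper's: decompose pruning power as a sum over queries of per-query pruning probabilities, observe that adding a property multiplies the ``not pruned'' product by a factor in $[0,1]$ so the marginal gain $(1-x_s)\prod_{t\in S}x_t$ is non-increasing in $S$, and conclude via closure of sub-modularity under non-negative sums. Your extra care about the empty-set boundary case and the bound $\pi_s(q)\le 1$ (via mutual exclusivity of answer options) is a welcome tightening of details the paper leaves implicit, but it is the same argument.
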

\begin{proof}
Consider the probability that one specific query is not pruned via questions relating to any property, given as $\prod_{s\in S}\Pr(\text{$q$ not pruned by $s$}|M)$ (see proof of Theorem~\ref{th:pruningPower}). From the perspective of each query, adding one more property corresponds to multiplying its probability of not being pruned by a factor between zero and one. For $x_1,x_2,y\in[0,1]$, it is generally $x_1-x_1\cdot y\geq x_2-x_2\cdot y$ if $x_1\geq x_2$. As the probability of not being pruned does not increase when adding questions, the impact of adding a new question on pruning probability decreases for each query. This means the probability of one query of being pruned is sub-modular in the question set. The same applies to pruning power itself (as a sum over sub-modular functions with positive weights is sub-modular).
\end{proof}

Next, we show that the simple greedy algorithm produces a near-optimal set of questions.

\begin{theorem}
Using the greedy algorithm, we select a set of questions that achieve pruning power within factor $1-1/e$ of the optimum.
\end{theorem}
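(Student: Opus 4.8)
The plan is to recognize this statement as an instance of the classical guarantee of Nemhauser, Wolsey and Fisher~\cite{Nemhauser1978} for greedily maximizing a normalized, monotone, sub-modular set function under a cardinality constraint. Three properties of the map $S\mapsto\mathcal{P}(S,Q,M)$ (with $Q$ and $M$ fixed) are needed: sub-modularity, monotonicity, and $\mathcal{P}(\emptyset,Q,M)=0$. Sub-modularity is already Theorem~\ref{th:pruningSubmodular}. The other two follow directly from the closed form of Theorem~\ref{th:pruningPower}. For the empty set, the inner product $\prod_{s\in S}(\cdot)$ is an empty product equal to one, so each summand is $1-1=0$ and hence $\mathcal{P}(\emptyset,Q,M)=0$. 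For monotonicity, note that each per-property factor $\sum_{i:q\notin E_s^i}\Pr(a_s^i\text{ correct}\mid M)$ is a partial sum of mutually exclusive probabilities and therefore lies in $[0,1]$; adding a property to $S$ thus multiplies $\prod_{s\in S}(\cdot)$ by a number in $[0,1]$, which can only lower the probability that a query $q$ survives and hence only raise $\mathcal{P}$. Finally, the cost model fixes the screen budget $nsc$ \emph{before} property selection, so the greedy loop stops after exactly $k:=nsc$ insertions; this is precisely a cardinality-$k$ constraint, which is the setting in which the $1-1/e$ factor holds.

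Given these properties, I would reproduce the standard short argument to keep the bound self-contained. Let $S^\star$ be an optimal property set with $|S^\star|=k$, and let $\emptyset=S_0\subseteq S_1\subseteq\cdots\subseteq S_k$ be the chain produced by greedy (suppressing the fixed arguments $Q,M$ in $\mathcal{P}$). For every $i$, monotonicity followed by sub-modularity gives
\[
\mathcal{P}(S^\star)\le\mathcal{P}(S^\star\cup S_i)\le\mathcal{P}(S_i)+\textstyle\sum_{s\in S^\star}\big(\mathcal{P}(S_i\cup\{s\})-\mathcal{P}(S_i)\big)\le\mathcal{P}(S_i)+k\big(\mathcal{P}(S_{i+1})-\mathcal{P}(S_i)\big),
\]
where the last inequality uses that greedy picks the property of maximal marginal gain and that $|S^\star|=k$. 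Setting $\delta_i:=\mathcal{P}(S^\star)-\mathcal{P}(S_i)$, this rearranges to $\delta_{i+1}\le(1-1/k)\,\delta_i$, so by induction $\delta_k\le(1-1/k)^k\delta_0=(1-1/k)^k\,\mathcal{P}(S^\star)\le e^{-1}\mathcal{P}(S^\star)$. Therefore $\mathcal{P}(S_k)\ge(1-1/e)\,\mathcal{P}(S^\star)$, which is the claimed approximation factor.

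There is no real obstacle here once the structural facts are in place; the argument is routine. The only points that need care are (i) verifying that pruning power really is monotone and normalized \emph{under the same independence and mutual-exclusivity assumptions} already used to derive the formula in Theorem~\ref{th:pruningPower} — without these, the Nemhauser--Wolsey--Fisher hypotheses would not strictly apply — and (ii) being explicit that the greedy stopping rule corresponds to a hard cardinality bound $k=nsc$ rather than a matroid or knapsack constraint, since the $1-1/e$ guarantee is specific to the cardinality case. I would also remark that the bound compares greedy against the optimal set \emph{of the same size} $nsc$, not against using arbitrarily many screens.
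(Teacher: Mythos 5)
Your proposal is correct and follows essentially the same route as the paper: both reduce the statement to the classical greedy guarantee of Nemhauser et al.\ by checking that pruning power is sub-modular (via Theorem~\ref{th:pruningSubmodular}), monotone, and normalized, under the cardinality constraint $k=nsc$. You additionally unpack the standard $(1-1/k)^k$ induction and verify $\mathcal{P}(\emptyset,Q,M)=0$ explicitly, which the paper leaves implicit by citing the reference as a black box, but this is elaboration rather than a different argument.
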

\begin{proof}
The greedy algorithm is equivalent to the greedy algorithm by Nemhauser~\cite{Nemhauser1978}. The pruning power function is sub-modular (see Theorem~\ref{th:pruningSubmodular}), it is non-negative (as we sum over probabilities) and non-decreasing (as pruning probability can only increase when adding more questions). Hence, it satisfies the conditions under which those bounds have been proven for Nemhauser's algorithm~\cite{Nemhauser1978}.
\end{proof}

Finally, we analyze time complexity (denoting by $nsc$ the number of screens, by $npr$ the number of properties, and by $nqu$ the number of query candidates).

\begin{theorem}
Finding optimal question sequences for verifying single claims is in $O(nsc\cdot npr\cdot nqu)$. 
\end{theorem}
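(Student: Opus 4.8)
The plan is to bound the running time of the routine that assembles a question sequence for one claim, observing that this routine makes four sub-decisions: fixing the number of screens $nsc$ and of options $nop$ (both given by the closed-form expressions of the earlier corollary, hence $O(1)$ to evaluate), choosing which $nsc$ of the $npr$ properties become screens (the greedy algorithm), and ordering the answer options on each chosen screen in decreasing probability (the last corollary). I would show that the greedy property selection dominates and already costs $O(nsc\cdot npr\cdot nqu)$, whereas sorting the options contributes only a lower-order $O(npr\cdot nop\log nop)$ term, with $nop=O(1)$ by the corollary; hence the total stays within the claimed bound.

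The crucial step is an incremental implementation of the greedy loop so that each marginal-gain evaluation costs only $O(nqu)$ rather than $O(nsc\cdot nqu)$. For the set $S$ of already-selected properties I would maintain, for every candidate query $q\in Q$, the ``survival factor'' $\pi_q=\prod_{s\in S}\sum_{i:q\notin E_s^i}\Pr(a_s^i\text{ correct}\mid M)$, i.e. the probability (under the independence and mutual-exclusivity assumptions of Theorem~\ref{th:pruningPower}) that $q$ has not yet been pruned; initially $\pi_q=1$ for all $q$. For a fixed additional property $s'$, the factor it contributes to $q$ is $\phi_{s'}(q)=\sum_{i:q\notin E_{s'}^i}\Pr(a_{s'}^i\text{ correct}\mid M)$; since each query is consistent with only $O(1)$ answer options of a given property (there are at most $nop=O(1)$ of them in total), all values $\phi_{s'}(q)$, $q\in Q$, can be read off in $O(nqu)$ time, and then $\mathcal{P}(S\cup\{s'\},Q,M)=\sum_{q\in Q}\bigl(1-\pi_q\cdot\phi_{s'}(q)\bigr)$ is computed in a further $O(nqu)$. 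One greedy round therefore scans all $npr$ candidates in $O(npr\cdot nqu)$, keeps the maximizer $s^{*}$, and updates $\pi_q\leftarrow\pi_q\cdot\phi_{s^{*}}(q)$ for every $q$ in $O(nqu)$. Iterating $nsc$ rounds yields $O(nsc\cdot npr\cdot nqu)$; adding the $O(nqu)$ initialization and the $O(npr\cdot nop\log nop)$ sorting leaves the asymptotics unchanged.

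I expect the only delicate point to be the justification that each marginal-gain evaluation is $O(nqu)$. This rests on two things: first, not recomputing the product over $S$ from scratch, which is handled by carrying the $\pi_q$ values; second, the per-property, per-query factor $\phi_{s'}(q)$ being computable in amortized constant time, which holds because the number of answer options per property is the constant $nop$ fixed by the earlier corollary, so each query intersects only $O(1)$ of the exclusion sets $E_{s'}^i$ (we never need to materialize these sets in full). Everything else is bookkeeping, and since the procedure is exactly the greedy algorithm for which the $1-1/e$ guarantee was established, no extra rounds beyond the $nsc$ screens are incurred.
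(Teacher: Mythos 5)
Your proposal is correct and follows essentially the same route as the paper, which simply counts $O(nsc)$ greedy rounds, $O(npr)$ candidate properties per round, and $O(nqu)$ work per pruning-power evaluation. Your incremental maintenance of the per-query survival factors $\pi_q$ is a welcome refinement: it justifies the $O(nqu)$ per-evaluation cost that the paper's proof asserts for a ``naive approach'' but which, if the product over the already-selected set $S$ were recomputed from scratch each time, would actually incur an extra factor of $O(nsc)$.
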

\begin{proof}
The greedy algorithm performs $O(nsc)$ steps and considers $O(npr)$ options in each step. Evaluating the pruning power function requires $O(nqu)$ steps if using a naive approach (we can reduce complexity if query candidates are represented by a Cartesian product between query properties). 
\end{proof}

The complexity of selecting optimal question sequences for claims is therefore polynomial in all problem dimensions. This is important, as we need to re-run this step for each claim in the document, whenever classifiers are retrained. This is due to the fact that expected verification cost, based on the optimal question sequence, forms the input to the algorithm discussed next.

\subsection{Claim Ordering}
\label{sub:ordering}

Next, we discuss the problem of determining a claim order for verification. At first, it may not be clear why verification order matters. If modeling verification cost per claim as a constant, total verification cost is simply the cost sum over all claims. In that model, verification order does not matter indeed.

However, verification cost per claim is not static. As time progresses, the quality of automated claim translation increases (as claims verified by crowd workers serve as training samples). This decreases expected claim verification cost at the same time (as crowd workers merely need to assert proposed claim translations). Hence, verifying claims in different order may indeed influence overall verification cost. 

We consider two criteria when selecting the next claims to verify. First, we consider the benefit of claim labels for training our classifiers (for automated claim to query translation). Second, we consider the expected verification cost.

The first point relates to prior work on active learning. Here, the goal is generally to select optimal training samples to increase the quality of a learned model. In our case, verified claims correspond to training samples for classifiers that translate claims to queries. Picking training samples with maximal uncertainty (according to the current model) is a popular heuristic in the context of active learning. We follow this approach as well and define the training utility as follows. 

\begin{definition}
Let $m\in M$ a model predicting specific properties of the query associated with a text claim $c$. We assume that $m$ maps each claim to a probability distribution over property values. Denote by $e(m,c)$ the entropy of that probability distribution. We define the training utility of $c$, $u(c)$ by averaging over all models (associated with different query properties): $u(c)=\sum_{m\in M}e(m,c)$.
\end{definition}

The second point (verification cost) relates to the cost model discussed in the previous subsection. However, this cost model is incomplete. It neglects the cost of understanding the context in which a certain claim is placed. Intuitively, verifying multiple claims in the same section is faster than verifying claims that are far apart in the input document. Our extended cost model takes this into account. In contrast to the model from the previous subsection, it calculates verification cost for claim batches (instead of single claims).

\begin{definition}
Denote by $C$ a batch of claims for verification. For each claim $c\in C$, denote by $s(c)$ the section in which this claim is located (instead of sections, a different granularity such as paragraphs can be chosen as well). Denote by $v(c)$ the pure claim verification cost for $c$ defined in the last subsection. Further, denote by $r(s)$ the cost of reading (respectively skimming) section $s$. We define the total (combined verification and skimming) cost for claim batch $C$ as the sum of both verification cost over all claims and reading cost over all associated sections: $t(C)=(\sum_{c\in C}v(c))+(\sum_{s\in \{s(c)|c\in C\}}r(s))$.
\end{definition}

This cost model has the desired property: it captures the fact that verifying claims in the same section is faster. Our approach to claim ordering is based on this model. It is not useful to determine a global claim order before verification starts. We cannot predict how the quality of classifiers (and therefore claim verification cost) will change over time. Instead, we repeatedly select claim batches that are presented to the checkers. Those claim batches are selected based on training utility and the aforementioned cost model. 

Note that we prefer selecting claim batches as opposed to single claims. First, presenting fact checkers with claim batches allows them to better plan their verification strategy. For instance, claims can be clustered in a first pass to treat claims that are semantically related together during verification. Second, integrating new training samples and optimally selecting claim batches is computationally expensive. As discussed next, selecting claim batches is a hard optimization problem. Also, retraining all classifiers (the operation that motivates re-running claim selection) is a relatively expensive operation on our test platform. By re-training on claim batches, rather than single claims, we reduce computational overheads.

To select claim batches, we solve the following optimization problem.

\begin{definition}
Given a set of unverified claims $C$, the goal of claim selection is to select a claim batch $B\subseteq C$ such that total cost of $B$ remains below a threshold $t_m$: $t(B)\leq t_m$. Additionally, the minimal and maximal batch size is restricted by parameters $b_l$ and $b_u$: $b_l\leq |B|\leq b_u$. Under those constraints, the goal is to maximize accumulated training utility $\sum_{c\in B}u(c)$. Alternatively, as a variant, we minimize the cost formula $t(B)-w_u\cdot \sum_{c\in B}u(c)$ where $w_u$ is a weight representing the relative importance of selecting claims with high uncertainty for classifier training. 
\end{definition}

This problem is hard, as shown by the following theorem.

\begin{theorem}
Claim selection is NP-hard.
\end{theorem}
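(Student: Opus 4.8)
The plan is to prove NP-hardness by a reduction from a well-known NP-hard problem. The claim selection problem asks us to pick a subset $B$ of claims, subject to a budget constraint $t(B)\leq t_m$ and cardinality bounds $b_l\leq|B|\leq b_u$, so as to maximize $\sum_{c\in B}u(c)$. Stripping away the cardinality bounds (set $b_l=0$, $b_u=|C|$) and the section-reading term (assume every claim lies in its own section with $r(s)=0$, or all claims in one section with $r(s)$ folded into $t_m$), the problem reduces to: choose a subset of items, each with a ``size'' $v(c)$ and a ``value'' $u(c)$, maximizing total value subject to total size $\leq t_m$. This is exactly the classical 0/1 Knapsack problem, which is NP-hard. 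So the reduction is essentially an embedding: given a Knapsack instance with item weights $w_i$, values $\nu_i$, and capacity $W$, construct one claim per item with $v(c_i)=w_i$, $u(c_i)=\nu_i$, put each claim in its own section with zero reading cost, set $t_m=W$, $b_l=0$, $b_u=n$; then an optimal claim batch corresponds exactly to an optimal knapsack packing.

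The steps I would carry out, in order: (1) recall the decision version of 0/1 Knapsack and its NP-completeness; (2) describe the polynomial-time mapping from a Knapsack instance to a claim-selection instance, being explicit about how each parameter ($v$, $u$, $r$, $s$, $t_m$, $b_l$, $b_u$) is set so that the extra structure of the claim problem degenerates away; (3) argue the correspondence is value-preserving, i.e. a batch $B$ is feasible with utility $\geq k$ in the constructed instance if and only if the corresponding item set is feasible with value $\geq k$ in the Knapsack instance — this is immediate from the construction since $t(B)=\sum_{c\in B}v(c)$ when reading costs vanish and cardinality bounds are vacuous; (4) conclude that a polynomial algorithm for claim selection would solve Knapsack in polynomial time, so claim selection is NP-hard. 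If one wants to handle the ``variant'' objective $t(B)-w_u\sum_{c\in B}u(c)$ as well, the same construction works with a suitable choice of $w_u$, or one simply observes that the budgeted-maximization form is already one of the two stated formulations.

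The main obstacle — really the only subtlety — is making sure the auxiliary features of the claim-selection problem genuinely collapse under the reduction, so that nothing in the construction accidentally makes the instance easier or changes its optimal value. Concretely I need to check that assigning each claim to a distinct section with $r(s)=0$ is legitimate under the model (the definition explicitly allows choosing the granularity, so singleton ``sections'' are fine), that the cardinality constraints can be made non-binding, and that $v(c)$ can be set to an arbitrary nonnegative rational (it is a cost, so this is unproblematic). Once those degeneracies are in place, the equivalence is a one-line observation rather than a computation. A minor alternative, if one prefers not to rely on Knapsack's numeric weights, is to reduce from Subset-Sum instead (a special case of Knapsack where $\nu_i=w_i$), which makes the reduction even more transparent; I would mention this only if a referee worried about weakly-versus-strongly NP-hard distinctions, but for the purposes of this paper the Knapsack reduction suffices.
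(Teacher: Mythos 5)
Your reduction is correct and is essentially the paper's own proof: both reduce from 0/1 Knapsack by creating one claim per item, placing each claim in its own section, making the per-claim cost equal the item weight (you set $v(c_i)=w_i$ and $r(s_i)=0$; the paper sets $v(c_i)+r(s_i)=w_i$, which is the same thing), setting utility equal to benefit, and choosing $b_l=0$, $b_u=|I|$ so the cardinality bounds are vacuous. Your additional care about why the auxiliary structure collapses is a welcome tightening of the paper's terser argument, but the route is identical.
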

\begin{proof}
We prove NP-hardness by a reduction from the knapsack problem. Let $I=\{\langle w_i,b_i\}$ a set of items with associated weights $w_i$ and benefit $b_i$. The goal is to maximize accumulated benefit $\sum_{i\in I^*}b_i$ for an item set $I^*\subseteq I$ whose accumulated weight remains below a threshold $T$: $\sum_{i\in I^*}b_i\leq T$. We construct an equivalent instance of claim selection as follows. We introduce an unverified claim $c_i$ for each item $i\in I$. We assume that each claim is located in a separate section ($s_i$ for claim $c_i$). We set combined verification and reading cost for each claim and associated section to be proportional to item weight: $v(c_i)+r(s_i)=w_i$. Training utility is proportional to benefit ($u(c_i)=b_i$). We choose cardinality bounds that do not influence the solution ($b_l=0$ and $b_u=|I|$). Now, an optimal solution to claim verification yields an optimal solution to the original knapsack instance (via a polynomial time transformation).
\end{proof}

The fact that claim selection is NP-hard justifies the use of sophisticated solver tools. We reduce the problem to integer linear programming. This allows us to apply mature solvers for this standard problem. Next, we discuss how we transform claim selection into integer linear programming.

An integer linear program (ILP) is generally characterized by a set of integer variables, a set of linear constraints, and a (linear) objective function. The goal is to find an assignment from variables to values that minimizes the objective function, while satisfying all constraints. 

We introduce binary decision variables of the form $cs_i$, indicating whether the $i$-th claim was selected ($cs_i=1$) or not ($cs_i=0$). Also, we introduce binary variables of the form $sr_j$ to indicate whether section number $j$ needs to be skimmed or not (to verify the selected claims). Next, we express the constraints of our scenario on those variables. First, we limit the number of selected claims to the range $[b_l,b_u]$ by introducing the linear constraints $b_l\leq \sum_i cs_i\leq b_u$. Next, we represent the constraint that sections of selected claims must be read. We introduce constraints of the form $sr_j\geq cs_i$ if claim $i$ is located within section $j$. Furthermore, we limit accumulated verification cost of the selected claims by the constraint $(\sum_i cs_i\cdot v(c_i))+(\sum_j sr_j\cdot r(s_j))$. Finally, we set $-\sum_i cs_i\cdot u(c_i)$ as objective function to minimize. 

The time complexity for solving a linear program generally depends on the solver (and the algorithm it selects to solve a
specific instance). However, the number of variables and constraints often correlates with solution time. We analyze both in the following.

\begin{theorem}
The size of the ILP problem is in $O(cc\cdot sc)$ where $cc$ is the claim count and $sc$ the section count.
\end{theorem}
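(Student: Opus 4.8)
The plan is to bound the number of ILP variables and constraints separately, and observe that both are $O(cc\cdot sc)$, hence so is the total problem size. First I would count variables. The ILP introduces one binary variable $cs_i$ per claim, contributing $cc$ variables, and one binary variable $sr_j$ per section, contributing $sc$ variables; so the variable count is $O(cc+sc)$, which is trivially $O(cc\cdot sc)$. Next I would count constraints, going through the four families introduced in the construction: (i) the two cardinality constraints $b_l\leq\sum_i cs_i\leq b_u$ contribute $O(1)$; (ii) the total-cost bound $(\sum_i cs_i\cdot v(c_i))+(\sum_j sr_j\cdot r(s_j))\leq t_m$ contributes $O(1)$; (iii) the ``section must be read'' linking constraints $sr_j\geq cs_i$ are generated once for each (claim, containing-section) pair, of which there are at most $cc\cdot sc$ (in fact at most $cc$, since each claim lies in one section, but $O(cc\cdot sc)$ suffices for the stated bound); (iv) the binary/integrality declarations, one per variable, contribute $O(cc+sc)$. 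Summing, the constraint count is $O(cc\cdot sc)$.

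Combining the two counts, the total description size of the ILP — variables plus constraints, each constraint having at most $O(cc+sc)$ nonzero coefficients so the encoding length is also polynomial — is $O(cc\cdot sc)$, which is the claimed bound. I would phrase the final line as: since both the variable count and the constraint count are $O(cc\cdot sc)$, the problem size is $O(cc\cdot sc)$.

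There is no real obstacle here; the only thing to be slightly careful about is what ``size'' means. If ``size'' counts only variables and constraints (ignoring coefficient magnitudes and the length of each row), the bound is immediate from the counts above — and indeed one could even claim the tighter $O(cc+sc)$ were it not for the linking constraints, which are what force the product. If instead ``size'' is meant to include the bit-length of the full coefficient matrix, one additionally notes each constraint row is sparse (the cost and cardinality rows have $O(cc+sc)$ entries, the linking rows have two entries each), so the encoding length is still polynomial and dominated by $O(cc\cdot sc)$. I would state the assumed notion of size explicitly at the start of the proof to keep it clean.
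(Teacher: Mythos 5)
Your proposal is correct and follows essentially the same argument as the paper: count the variables ($O(cc+sc)$, one per claim plus one per section) and the constraints (dominated by the claim--section linking constraints $sr_j\geq cs_i$, of which there are $O(cc\cdot sc)$). Your side observation that the linking constraints number only $cc$ when each claim lies in a single section, and your explicit discussion of what ``size'' measures, are refinements the paper omits but do not change the approach.
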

\begin{proof}
The number of variables is in $O(cc+sc)$ while the number of constraints (specifically: constraints connecting claims to sections read) is in $O(cc\cdot sc)$.
\end{proof}

The ILP size grows relatively slowly in the number of claims and sections. While claim selection remains NP-hard, we show in our experiments that we can solve corresponding problem instances sufficiently fast in practice.

\section{Experiments}
\label{sec:exps}
We evaluated \system using real data along two dimensions: (i) the effectiveness of the tool in real verification tasks with domain experts, (ii) the effectiveness
and efficiency of question scheduling. The code of the system is available online\footnote{\url{https://github.com/geokaragiannis/statchecker}}.

\vspace{1ex}
\noindent {\bf Dataset.} We obtained a document of 661 pages, containing 7901 sentences, and the corresponding corpus of manually checked claims, with check annotations for every claim from three domain experts.
The annotations cover 
1539 numerical claims, of which about half are explicit. The massive effort in checking claims is because the document authors write the report with early estimates, so the data underlying the book change over time. In the first pass of the draft, up to 40\% of the claims are updated.

\begin{table}[ht]
    \centering
    \caption{Percentiles of property value frequencies.}
    \begin{tabular}{llllll}
    \toprule[1pt]
    \textbf{Percentiles} & \textbf{10\%} & \textbf{25\%} & \textbf{50\%} & \textbf{95\%} &\textbf{99\%}\\
    \midrule[1pt]
    \textbf{Relation} & 2 & 4 & 10 & 199 & 532\\
    \midrule
     \textbf{Primary Key} & 2 & 2 & 4 &39 &107 \\
    \midrule
     \textbf{Attribute} & 1 & 2 & 7 & 127 & 1400\\
     \midrule
     \textbf{Formula} & 1 & 1 & 1 & 8 & 55\\
    \bottomrule[1pt]
    \end{tabular}
    \label{tab:percentiles}
\end{table}


After processing the claims, we identify 1791 relations, 830 key values, 87 attribute labels, and 413 formulas. Table~\ref{tab:percentiles} shows some percentiles of the frequency distribution of each property. We see that 50\% of the values for all properties appear at most 10 times in the corpus, with the top 5\% most frequent formulas appearing at least 8 times.

\subsection{User Study}
\label{sub:user_study}

In this experiment we involved seven domain experts from the institution to measure the benefit of our system compared to the traditional manual workflow for verification. 
We trained \system with all the annotated statistical claims and randomly selected 43 claims among the ones with the 10 formulas that cover the majority of the claims. As we only have access to the correct version of the claim, we randomly selected 25\% of them to inject errors.

Three experts have been randomly assigned to the \textbf{Manual} process and the remaining four to the \textbf{System}-assisted process. We gave them instructions to execute the test without interruptions and without collaboration. Three claims (two correct, one incorrect) have been used for training on the new process and the remaining 40 for the study. The task given to the experts was to verify as many claims as possible in 20 minutes, given access to their traditional tools in the manual process (spreadsheets and databases) and to our system only in the second case. The order of the claims has been fixed to allow comparison among experts and the time for checking every claims has been registered.

\begin{figure}[t]
\begin{tikzpicture}
\begin{axis}[
    height=5cm,
    width=8cm,
    ymajorgrids,
	ylabel=\# Claims,
	xlabel= Checkers,
	legend style={at={(0.5,-0.28)},anchor=north,legend columns=-1},
	enlargelimits=0.1,
 	ybar stacked,
 	xtick=data,
        /pgfplots/nodes near coords*/.append style={
            every node near coord/.style={
                name=X,
                shift={(9.5pt,8pt)}
            },
            scatter/@post marker code/.append code={
                \node(Y){};
            }
        },
	nodes near coords,
	every node near coord/.append style={font=\footnotesize},
	bar width=10pt,
    symbolic x coords={M1,M2,M3,S1,S2,S3,S4}
]
	\addplot coordinates
		{(M1,10) (M2,13) (M3,8) (S1,19) (S2,26) (S3,23) (S4,20)};
	\addplot coordinates
		{(M1,0) (M2,0) (M3,0) (S1,1) (S2,2) (S3,2) (S4,1)};
	\addplot coordinates
		{(M1,0) (M2,0) (M3,2) (S1,1) (S2,0) (S3,0) (S4,1)};
\legend{Correct,Incorrect,Skipped}
\end{axis}
\end{tikzpicture}
\caption{Number of claims verified in 20 minutes by checkers with the manual process (M1--M3) and with our system (S1--S4).}
\label{fig:users}
\end{figure}
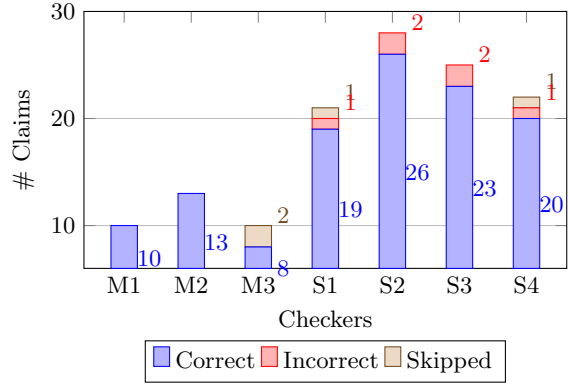

We distinguish three cases: skipped claims, claims that have been correctly labelled, and incorrect decisions. Results for each checker are reported in Figure~\ref{fig:users}. Considering correct and incorrect checks, on average a user verifies 7 claims manually and 23 claims with \system in 20 minutes. Users tend to skip a comparable amount of claims in both settings. In the System process, a few claims have been incorrectly checked. Those are all correct claims labelled as incorrect. However, with a simple majority voting across any subset of three checkers, our system obtain 100\% accuracy as in the manual process. 
There was only one claim where verification time using the tool surpassed the traditional manual verification time. After investigating this special case, it turned out that this was due to sequential checking. The user consulted a relation different from what we were expecting him to choose. The different relation led him to the correct answer, but such relation was used also in the previous question with the same primary key and attributes values, making this claim very fast to verify.

By using simple majority voting over three checkers, the accuracy of the aggregate answers is 100\%  for both the Manual and the System groups.

\begin{figure}
\begin{tikzpicture}
\begin{axis}[
    height=5cm,
    width=8.5cm,
    ymajorgrids,
	xlabel= Claim complexity,
	ylabel= Verification time (secs),
	legend style={at={(0.712,0.91)},
	anchor=north,legend columns=-1}]
\addplot[line width=1.64pt, color=blue, error bars/.cd, y dir=both, y explicit] coordinates {(3,98)+=(3,61)-=(3,61) (6,105)+=(6,80)-=(6,80)};
\addplot[line width=1.64pt, color=red, error bars/.cd, y dir=both, y explicit] coordinates {(3,44)+=(3,14)-=(3,14) (6,50)+=(6,19)-=(6,19) (8,60)+=(8,19)-=(8,19) 
(10,50)+=(10,13)-=(10,13) (11,78)+=(11,57)-=(11,57)};
\legend{Manual, System}
\end{axis}
\end{tikzpicture}
\caption{Average time to verify claims of increasing complexity with the Manual and System processes.}
\label{fig:time_complexity}
\end{figure}
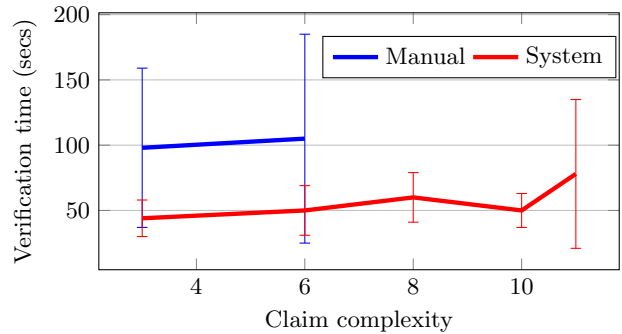

We also report in Figure~\ref{fig:time_complexity} average verification time and standard deviation for the two groups of checkers with claims of increasing complexity.
The claim complexity is the sum of the elements in the query to verify it: number of key values, attributes, operations, constants and variables. Checkers using \system take on average less than half the time to verify claims of the same complexity. The average time taken by checkers using our system for claims of 11 elements is lower that the time used by checked with the manual process with 6 elements. We report in the plot claims for which at least two checkers have been able to process it. We are therefore not showing in the plot a checker using \system who took on average 29 seconds to verify two claims of complexity 14. We remark that for one claim of complexity 6, it took 203 for one of the Manual users to verify it, while for the same claim the slower System user spent 66 for the same task. 

We conducted the study on a laptop (1.80GHz x 8 i7 CPU, 32 GB of memory).
For any claim, testing a classifier took less than 0.2 seconds and query generation took less than half a second (0.35 seconds on average).

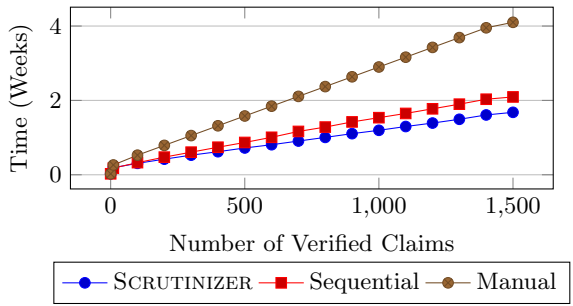
\begin{figure}
\begin{tikzpicture}
\begin{axis}[
    height=4cm,
    width=8cm,
    ymajorgrids,
	xlabel= Number of Verified Claims,
	ylabel= Time (Weeks),
	legend style={at={(0.5,-0.4)},
	anchor=north,legend columns=-1}]
\addplot 
	table[x=num_training_data,y=milp_sum_cost_weeks,col sep=comma]{files/plots_base_seq_milp_costs+acc.csv};
\addplot 
	table[x=num_training_data,y=seq_sum_cost_weeks,col sep=comma]{files/plots_base_seq_milp_costs+acc.csv};
\addplot 
	table[x=num_training_data,y=sum_baseline_cost_weeks,col sep=comma]{files/plots_base_seq_milp_costs+acc.csv};
\legend{\system, Sequential, Manual}
\end{axis}
\end{tikzpicture}
\caption{Accumulated verification time over verification period.}
\label{fig:cost_experiment}
\end{figure}

\subsection{Simulation}
In the previous subsection, we have demonstrated that \system decreases verification overheads for single claim batches. Next, we study the efficiency of \system when verifying entire reports. Verification time for entire reports is typically in the order of months for IEA. Hence, we cannot use another user study. Instead, we created a simulator, based on the results of our initial user study. We simulate the verification of the 2018 IEA world energy outlook report, using the original claims and original data. We assume a team of three fact checkers (which is typical for IEA). We simulate a ``cold start'' scenario, meaning that our classifiers have no initial training data. Instead, they use claim labels provided by simulated fact checkers. This corresponds to the worst case for our system. It represents a scenario in which the very first version of a new report is received and verified. Our model for verification time per claim is based on time measured in the user study. It takes into account reduced verification overheads once proposed query fragments are accurate.
We compare three baselines. First, we consider manual verification (``Manual'') which is the current default. Each claim is verified without any computational support. Second, we consider a simplified version of \system. This version (``Sequential'') does not optimally reorder claims, as described in Section~\ref{sub:ordering}, but verifies them sequentially (i.e., in document order) instead. We compare those two approaches against the \system system. For Sequential and \system, we assume that ten answer options are shown per property. For \system, we use claim batches of size 100 (after which we retrain classifiers and select the next claims to verify via ILP). Our simulator is implemented in Python~3, using Gurobi~9.0.1 as ILP solver. Experiments were executed on a MacBook Pro with 2.4~GHz Intel Core i5 processor and 8~GB memory.

\begin{table}[ht]
    \centering
    \caption{Summary of simulation results.}
    
    \begin{tabular}{llll}
    \toprule[1pt]
    \textbf{} & \textbf{Manual} & \textbf{Sequential} & \textbf{Scrut.}\\
    \midrule[1pt]
    \textbf{Time (Weeks)} & 4.1 & 2.1 & 1.7\\
    \midrule
     \textbf{\% Savings} & - & 49\% & 59\%\\
    \midrule
     \textbf{Avg.  Accuracy} & - & 40\% & 47\%\\
     \midrule
     \textbf{Max  Accuracy} & - & 46\% & 53\%\\
     \midrule
     \textbf{Comp.\ (Mins)} & - & 14 & 28\\
    \bottomrule[1pt]
    \end{tabular}
    \label{tab:simulation_summary}
\end{table}

\begin{figure}[t]
\begin{tikzpicture}
\begin{axis}[
    height=4cm,
    width=8cm,
    ymajorgrids,    
	xlabel= Number of Verified Claims,
	ylabel= Accuracy (\%),
	legend style={at={(0.5,-0.4)},
	anchor=north,legend columns=-1}]
\addplot 
	table[x=num_training_data,y=milp_acc,col sep=comma]{files/plots_base_seq_milp_costs+acc.csv};
\addplot 
	table[x=num_training_data,y=seq_acc,col sep=comma]{files/plots_base_seq_milp_costs+acc.csv};
\legend{\system, Sequential}
\end{axis}
\end{tikzpicture}
\caption{Evolution of \system and sequential average accuracy over verification period.}
\label{fig:simulation_acc}
\end{figure}
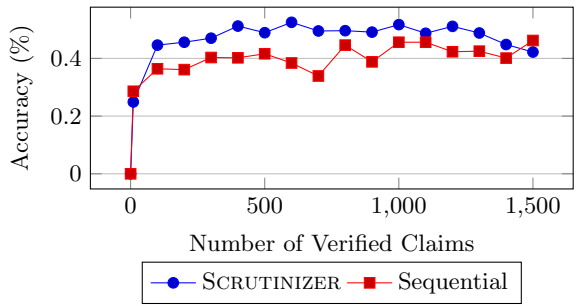

Table~\ref{tab:simulation_summary} summarizes simulation results. We report total verification time for all three fact checkers, assuming an eight hours work day and a five day week. We make the following observations. First, using \system reduces verification overheads by more than factor two (circa 60\%). This is consistent with the results of our user study. At the same time, it is remarkable since we consider a cold start scenario. The results show that, given a sufficiently large document to verify, the initial warmup period of the classifiers does not impact overall performance by too much. Second, we observe a positive impact due to claim ordering. While using \system without that feature is still helpful, cost savings increase when claims are systematically prioritized. Table~\ref{tab:simulation_summary} shows that, in the latter case, the average (and maximal) quality of classification over the entire period improves. Figure~\ref{fig:cost_experiment} shows that \system and the sequential baseline are near-equivalent at the beginning of the classification process. Claim ordering pays off more and more as verification proceeds. At the same time, computational overheads are negligible for all compared systems. \system spends 15 minutes in total to plan optimal question sequences, and for selecting optimal claims via ILP. The remaining 13 minutes are due to retraining classifiers.

Figure~\ref{fig:simulation_acc} digs deeper and analyzes classifier accuracy, as a function of verification time. We compare \system and the sequential baseline. The accuracy of \system dominates the one of the sequential version over most of the verification period. The only exceptions appear at the very beginning and towards the end of the verification process. This can be explained as follows. \system makes an upfront investment by selecting the claims for which classifiers are most uncertain. This translates into higher per-claim verification costs since proposed query fragments are often incorrect. On the other side, classifiers learn faster which increases accuracy for the following batches. Once classification accuracy increases, the term evaluating claims according to their utility as training samples in the ILP objective function decreases. Instead, terms related to expected verification costs become dominant. \system then selects claims that have lower expected verification costs. It postpones handling of difficult claims, associated with low classifier confidence, towards the very end. Once only those claims are left, accuracy of \system drops below the one of the sequential version. The effect of this drop on verification cost is however negligible.

Figure~\ref{fig:milp_acc} decomposes accuracy for \system (with claim ordering) according to classifier type. The effect discussed in the last paragraph (a steep increase followed by a drop towards the end) still hold when considering classifiers separately. Further, we notice that certain properties are harder to infer from text. For instance, inferring row indices is among the hardest classification tasks. This can be explained by the fact that the classification domain (i.e., number of rows) is typically larger than for other classifiers (e.g., columns). Also, different data subsets often have a similar structure (translating into the same set of columns) which is not necessarily the case for the row keys.

Finally, in Figure~\ref{fig:topk_acc}, we analyze accuracy for the top-k labels and for different classifiers. In most cases, classifiers reach most of their potential with the first 10 entries. 

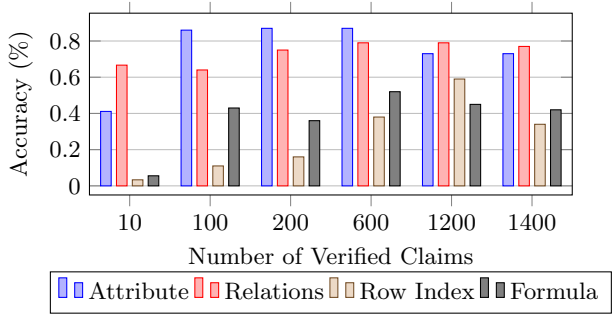
\begin{figure}[t]
\begin{tikzpicture}
\begin{axis}[
    height=4cm,
    width=8cm,
    ymajorgrids,
	ylabel=Accuracy (\%),
	xlabel=Number of Verified Claims,
	legend style={at={(0.5,-0.42)},
		anchor=north,legend columns=-1},
	enlargelimits=0.1,
	ybar,
	bar width=4pt,
    symbolic x coords={10,100,200,600, 1200, 1400}
]
\addplot
    table[x=num_training_data,y=column,col sep=comma]{files/milp_individual_classifiers.csv};
\addplot
    table[x=num_training_data,y=file,col sep=comma]{files/milp_individual_classifiers.csv};
\addplot
    table[x=num_training_data,y=row_index,col sep=comma]{files/milp_individual_classifiers.csv};
\addplot
    table[x=num_training_data,y=template_formula,col sep=comma]{files/milp_individual_classifiers.csv};

\legend{Attribute,Relations,Row Index,Formula}
\end{axis}
\end{tikzpicture}
\caption{Evolution of classifier accuracy over verification period.}
\label{fig:milp_acc}
\end{figure}

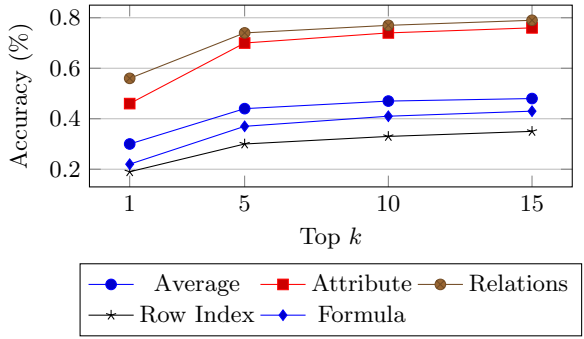
\begin{figure}[t]
\begin{tikzpicture}
\begin{axis}[
    xtick={1,5,10,15},
    height=4cm,
    width=8cm,
    ymajorgrids,
    xlabel=Top $k$,
    ylabel=Accuracy (\%),
    legend style={at={(0.5,-0.42)},
		anchor=north,legend columns=3}]
\addplot
    coordinates {
	(1, 0.3)
	(5, 0.44)
	(10, 0.47)
	(15, 0.48)
};
\addplot
    coordinates {
	(1, 0.46)
	(5, 0.70)
	(10, 0.74)
	(15, 0.76)
};
\addplot
    coordinates {
	(1, 0.56)
	(5, 0.74)
	(10, 0.77)
	(15, 0.79)
};
\addplot
    coordinates {
	(1, 0.19)
	(5, 0.30)
	(10, 0.33)
	(15, 0.35)
};
\addplot
    coordinates {
	(1, 0.22)
	(5, 0.37)
	(10, 0.41)
	(15, 0.43)
};
\legend{Average, Attribute, Relations, Row Index, Formula}
\end{axis}
\end{tikzpicture}
\caption{Top $k$ accuracy for different \system classifiers as a function of $k$. }
\label{fig:topk_acc}
\end{figure}

\section{Related Work}
\label{sec:related}
The automatic verification --aka fact checking-- of textual claims is a wide research topic, with different proposals covering different aspects. A first division is between two tasks: (1) spotting worth-checking claims~\cite{HassanALT17,JaradatGBMN18}, (2) verifying claims.
\system targets the second task.

More precisely, our solution uses external information in the form of reference data. This is different from approaches that exploit Web documents~\cite{MihaylovaNMBMKG18,ThorneV18,Wang0BK18} or 
databases of previously checked claims~\cite{KaragiannisTJKW19,HassanZACJGHJKN17},
but also from the study of misinformation spread in social networks~\cite{Sherchan:2013:STS:2501654.2501661,FerraraVDMF16}.

We also distinguish the verification of property claims and statistical claims.
Simple property claims, such as ``Berlin is the capital of Germany'' have been studied in the context of reference information expressed in (incomplete) knowledge bases (KBs)~\cite{ciampaglia2015computational,shi2016discriminative,gardner2014incorporating,gardner2015efficient,DBLP:journals/pvldb/HuynhP19}. Here the main challenge comes from a claim that is validated by facts in the KB at hand. The check consists in deriving from existing information if the fact is missing due to incompleteness or because it is incorrect. An important aspect is the creation of a human-consumable explanations for the fact checking decisions~\cite{Leblay17,Gad-Elrab0UW19,expclaim2019}; \system goes in this direction as declarative queries are easy to parse for users.

\begin{table}[ht]
    \centering
    \caption{Properties of the systems. 
}
    \begin{tabular}{lllll}
    \toprule[1pt]
& {\sc Scruti-} & \textsc{Agg} & \textsc{BriQ} & \textsc{Stat}\\
& {\sc nizer} & \textsc{Check.}\cite{JoTY0YLM19} & \cite{IbrahimRWZ19} & \textsc{Search}\cite{CaoMT18}\\
    \midrule[1pt]
\textbf{Task} & check & check  & check   & search \\
 & $n$ claims  & 1 claim & 1 claim  & 1 claim\\
    \midrule
\textbf{Claims} & general & explicit & explicit   & explicit\\
    \midrule
\textbf{Query} & SPA + & SPA + & SPA + &  SP \\
 & 100s ops   & 9 ops  &    6 ops & \\
\midrule
\textbf{User} & crowd  & single & single   & single\\
\midrule
\textbf{Dataset} & corpus  & single & single   & corpus\\
    \bottomrule[1pt]
    \end{tabular}
    \label{tab:related}
\end{table}


\system relates to prior efforts on data-driven fact checking of statistical claims~\cite{CaoMT18,IbrahimRWZ19,JoTY0YLM19} with differences in the main properties as reported in Table~\ref{tab:related}. The closest work is the AggChecker system~\cite{JoTY0YLM19} in that it translates statistical claims into SQL queries for verification. It differs however by the following characteristics. First, \system supports a richer query model than prior work (including implicit queries) and is able to learn new query templates during the verification process. This is necessary as complex queries with long arithmetic expressions are omnipresent in the IEA data. Second, \system is targeted at the verification of large documents with hundreds of pages. Such documents are typically verified by teams over extended periods of time. This focus motivates the design as a crowd system~\cite{LiWZF17,FranklinKKRX11}, as well as optimizations such as active learning, that only pay off in the long term. Also, \system relates to prior work on mixed-initiative fact checking~\cite{Gatterbauer2009}; however, they do not consider claims that are verified by executing queries on structured data. It is complementary to prior work on verifying robustness of claims that are initially given as SQL queries~\cite{wu2014toward} (as, in our scenario, the main challenge is the translation from text to queries). Finally, our work relates to prior work in the domain of natural language query interfaces~\cite{AgrawalCD02,LiJ14,Zhong2017}. As pointed out in prior work~\cite{JoTY0YLM19}, verification creates new challenges and opportunities. For instance, we can exploit values that appear in claim text to narrow down query candidates.

\section{Conclusion}
\label{sec:conclusion}
We introduced \system, the first system for crowdsourcing the verification of general statistical claims. Our solution effectively minimize the amount of work needed by a group of domain experts to verify a large variety of textual claims in a document. Experimental results show that our algorithms enable the cold start of the system and leads to the automatic generation of queries that are easy to parse and validate by humans. Given the results of our user study, where \system reduces the verification time to less than 50\% compared to the current workflow, the IEA is planning to start using our system for their next edition of the yearly report, in the summer of 2020.



\bibliographystyle{abbrv}
\bibliography{FC}

\balance

\end{document}